\newcommand{\noteStyleEmpty}[1]{\marginpar{}}
\newcommand{\mapNoteToDefinition}[2]{
\expandafter\renewcommand\csname#1\endcsname[1]{\csname#2\endcsname{##1}}%
}
\newcommand{\setnotelevel}[1]{%
\ifnum#1<1%
    \expandafter\mapNoteToDefinition{mynote}{noteStyleEmpty}%
    \expandafter\mapNoteToDefinition{mysubnote}{noteStyleEmpty}%
    \expandafter\mapNoteToDefinition{mysubsubnote}{noteStyleEmpty}%
\fi%
\ifnum#1=1%
    \expandafter\mapNoteToDefinition{mynote}{mynoteStyleDefault}%
    \expandafter\mapNoteToDefinition{mysubnote}{noteStyleEmpty}%
    \expandafter\mapNoteToDefinition{mysubsubnote}{noteStyleEmpty}%
\fi%
\ifnum#1=2%
    \expandafter\mapNoteToDefinition{mynote}{mynoteStyleDefault}%
    \expandafter\mapNoteToDefinition{mysubnote}{mysubnoteStyleDefault}%
    \expandafter\mapNoteToDefinition{mysubsubnote}{noteStyleEmpty}%
\fi%
\ifnum#1>2%
    \expandafter\mapNoteToDefinition{mynote}{mynoteStyleDefault}%
    \expandafter\mapNoteToDefinition{mysubnote}{mysubnoteStyleDefault}%
    \expandafter\mapNoteToDefinition{mysubsubnote}{mysubsubnoteStyleDefault}%
\fi%
}
\newtheorem{theorem}{Theorem}
\newtheorem{corollary}{Corollary}
\acrodef{RV}{random variable}
\acrodef{FT}{Fourier Transform}
\acrodef{i.i.d.}{independent, identically distributed}
\acrodef{p.m.}{probability measure}
\acrodef{p.d.f.}{probability density function}
\acrodef{c.d.f.}{cumulative distribution function}
\acrodef{ch.f.}{characteristic function}
\acrodef{AWGN}{additive white gaussian noise}
\acrodef{SNR}{signal-to-noise ratio}
\acrodef{LRT}{likelihood ratio test}
\acrodef{GLRT}{generalized likelihood ratio test}
\acrodef{LOS}{line-of-sight}
\acrodef{NLOS}{non-line-of-sight}
\acrodef{GDOP}{geometric dilution of precision}
\acrodef{GPS}{Global Positioning System}
\acrodef{FIM}{Fisher information matrix}
\acrodef{PEB}{position error bound}
\acrodef{WSN}{Wireless Sensor Network}
\acrodef{MAC}{medium access control}
\acrodef{RSS}{received signal strength}
\acrodef{RTT}{round-trip time}
\acrodef{MF}{matched filter}
\acrodef{ED}{energy detector}
\acrodef{ML}{maximum likelihood}
\acrodef{NL}{nonlinear}
\acrodef{MSE}{mean square error}
\acrodef{RMSE}{root mean square error}
\acrodef{ppm}{part-per-million}
\acrodef{ACK}{acknowledge}
\acrodef{UWB}{ultrawide bandwidth}
\acrodef{TNR}{threshold-to-noise ratio}
\acrodef{NLOS}{non line-of-sight}
\acrodef{LOS}{line-of-sight}
\acrodef{LS}{least squares}
\acrodef{IR-UWB}{impulse radio UWB}
\acrodef{FCC}{Federal Communications Commission}
\acrodef{TH}{time-hopping}
\acrodef{PPM}{pulse position modulation}
\acrodef{PAM}{pulse amplitude modulation}
\acrodef{MUI}{multi-user interference}
\acrodef{PDP}{power delay profile}
\acrodef{BPZF}{band-pass zonal filter}
\acrodef{SIR}{signal-to-interference ratio}
\acrodef{RFID}{radiofrequency identification}
\acrodef{WPAN}{wireless personal area networks}
\acrodef{WWLB}{Weiss-Weinstein lower bound}
\acrodef{DP}{direct path}
\acrodef{MF}{matched filter}
\acrodef{MMSE}{minimum-mean-square-error}
\acrodef{SBS}{serial backward search}
\acrodef{NBI}{narrowband interference}
\acrodef{WBI}{wideband interference}
\acrodef{INR}{interference-to-noise ratio}
\acrodef{CIR}{channel impulse response}
\acrodef{LRT}{likelihood ratio test}
\acrodef{RADAR}{RADAR}
\acrodef{MUR}{Multistatic RADAR}
\acrodef{e.m.}{electromagnetic}
\acrodef{CW}{continuous wave}
\acrodef{RF}{radiofrequency}
\acrodef{FCC}{Federal Communications Commission}
\acrodef{EIRP}{effective radiated isotropic power}
\acrodef{RCS}{radar cross section}
\acrodef{BAV}{balanced antipodal Vivaldi}
\acrodef{PRake}{partial Rake}
\acrodef{RTLS}{real time locating systems}
\acrodef{Hi-RADIAL}{High-accuracy RAdio Detection, Identification,
And Localization}
\acrodef{CRB}{Cram\'{e}r-Rao bound}
\acrodef{ZZB}{Ziv-Zakai bound}
\acrodef{TOA}{time-of-arrival}
\acrodef{TOF}{time-of-flight}
\acrodef{WSN}{wireless sensor network}
\acrodef{MAC}{medium access control}
\acrodef{RSS}{received signal strength}
\acrodef{TDOA}{time difference-of-arrival}
\acrodef{RF}{radiofrequency}
\acrodef{RTT}{round-trip time}
\acrodef{AOA}{angle-of-arrival}
\acrodef{MF}{matched filter}
\acrodef{ED}{energy detector}
\acrodef{ML}{maximum likelihood}
\acrodef{MUR}{Multistatic radar}
\acrodef{HDSA}{high-definition situation-aware}
\acrodef{RRC}{root raised cosine}
\acrodef{OFDM}{orthogonal frequency division multiplexing}
\acrodef{IF}{intermediate frequency}
\acrodef{PHY}{physical layer}
\acrodef{S-V}{Saleh-Valenzuela}
\acrodef{UHF}{ultra-high frequency}
\acrodef{PR}{pseudo-random}
\acrodef{SoC}{System on Chip}
\acrodef{SoP}{System on Package}
\acrodef{SPMF}{Single-Path Matched Filter}
\acrodef{IMF}{Ideal Matched Filter}
\acrodef{SCR}{signal-to-clutter ratio}
\acrodef{BEP}{bit error probability}
\acrodef{BER}{bit error rate}
\newtheorem{proposition}{Proposition}
\newtheorem{remark}{Remark}
\begin{document}

\newcommand{\ud}{\mathrm{d}}  
%
\title{Capacity Achieving Peak Power Limited Probability Measures: Sufficient Conditions for Finite Discreteness}
%
%
%

\author{Vincenzo~Zambianchi,~\IEEEmembership{Student Member,~IEEE,}
        Enrico~Paolini,~\IEEEmembership{Member,~IEEE,}
        and Davide~Dardari,~\IEEEmembership{Senior Member,~IEEE}
\thanks{The authors are with the Department
of Electrical, Electronics and Information Engineering ``G. Marconi'' (DEI), University of Bologna, via Venezia 52, Cesena (FC) 47521,
Italy. E-mail: \{vincenzo.zambianchi, davide.dardari, e.paolini\}@unibo.it. This work has been supported by the GRETA PRIN Project.}}

%
%

\markboth
    {submitted to IEEE Transactions on Information Theory}
    {Zambianchi et al.: Capacity Achieving Peak Power Limited Probability Measures}
%



\maketitle

\begin{abstract}
The problem of capacity achieving (optimal) input \ac{p.m.} has been widely investigated for several channel models with constrained inputs. So far, no outstanding generalizations have been derived. This paper does a forward step in this direction, by introducing a set of new requirements, for the class of real scalar conditional output \ac{p.m.}'s, under which the optimal input \ac{p.m.} is shown to be discrete with a finite number of probability mass points when peak power limited.
\end{abstract}

\begin{IEEEkeywords}
Channel capacity, discrete input, conditional output probability measure, real scalar channels.
\end{IEEEkeywords}

%
\IEEEpeerreviewmaketitle

\section{Introduction}
%
%
%
%
\IEEEPARstart{I}{n} recent years, a great interest has been rising in what can be called discrete input channel modeling. This theory takes its first steps from the study of classical (Gaussian) additive noise channels under input constraints. The class of channels with input limitations is important from a practical point of view since feasible systems do always have to deal with input constraints: Peak and average power are necessarily bounded. The first works in this field were the ones by Smith back in the 70's \cite{Smi:69, Smi:71}: He made forward steps with respect to Shannon's work \cite{Sha:48} considering an additive Gaussian noise channel in which the input is either peak or both peak and average power constrained. He discovered that, under both constraints, the capacity achieving input \ac{p.m.} is discrete with a finite number of probability mass points. This kind of \ac{p.m.}'s will be referred to as \emph{finitely discrete} throughout this paper. Smith's result was of notable importance since continuous inputs are not feasible in practice and have to be approximated with finitely discrete inputs. 

The finitely discrete feature was demonstrated to be the exact solution for the capacity achieving input \ac{p.m.} in the constrained additive scalar Gaussian noise channel model. This paved the way to several subsequent studies that, more recently, explored the finite discreteness of capacity achieving input \ac{p.m.}'s for other input constrained channel models, presenting quite disparate characteristics. Among them we cite \cite{ShaBar:95} and \cite{FayTroSha:01}, which inspired further works such as \cite{Tch:04} and \cite{ChaHraKsc:05}. Concerning the two last mentioned works, the former presents conditions on the \ac{p.m.} of an additive scalar channel noise, that are sufficient for the optimal bounded input \ac{p.m.} to have a finitely discrete support. The latter demonstrates that such a support is sparse (see \cite{ChaHraKsc:05} for definition) when the channel conditional output \ac{p.m.}, possibly not scalar, is Gaussian distributed. Subsequent works exploited the finitely discrete nature of the input \ac{p.m.} in some specific cases (e.g., \cite{FeiMat:07, LeiGeiWit:12}) but no further generalizations have been developed to the authors' knowledge.

In this paper, we consider a wide real scalar channel model and provide sufficient conditions on the conditional output \ac{p.m.} for the peak power limited capacity achieving input \ac{p.m.} to be finitely discrete. We establish this result without indicating any particular type of conditional output \ac{p.m.} nor any particular kind of the channel input-output law. Moreover, we prove that several peak power constrained additive channels as well as the peak power constrained Rayleigh fading channel fall in the developed framework as particular cases, whereas so far they have always been regarded as two distinct categories, necessitating different mathematical treatments. In this respect, the presented conditions extend the theory of peak power limited real input scalar channels.

The contribution is organized as follows. In Section~\ref{sec:Prel} all necessary notation and definitions are introduced, while in Section~\ref{sec:Channel} our main result is stated. This result is gradually proved in Sections~\ref{sec:Existence}, \ref{sec:Analyticity}, and \ref{sec:Input}. Some hints about uniqueness of the capacity achieving input \ac{p.m.} are provided in Section~\ref{sec:Unique}. The above mentioned examples are analysed in Section~\ref{sec:Examples}, while conclusions are drawn in Section~\ref{sec:conclusions}. Ancillary results necessary for the proof of the main theorem are deferred to Appendices~\ref{sec:PrelSteps}, \ref{sec:Der}, and \ref{sec:Theorems} while Appendix~\ref{sec:Qandq} provides some deeper explanations concerning the earlier discussed examples.

\section{Notation and Early Definitions}\label{sec:Prel}
In this section we present our notation and definitions coherently with the ones given by previous authors \cite{Smi:71, ChaHraKsc:05}.

Throughout this paper, $Y$ and $X$ represent the real scalar channel output and input \acp{RV}, respectively. We denote by $F(x)$ the input \ac{c.d.f.}, by $p_X(x)$ the input \ac{p.m.}, and by $p_{Y|X}(y|x)$ the conditional output \ac{p.m.} The input \ac{RV} $X$ is assumed to take values in the set $\mathbb{S}$, with $\mathscr{P}$ being the ensemble of possible \ac{p.m.}'s defined on that set. The corresponding class of \ac{c.d.f.}'s is denoted by $\mathscr{F}$. 
We have
\begin{align} \label{p_Y}
& p_Y(y)= 
\int_{\mathbb{S}} p_{Y|X}(y|x)p_X(x) \ud x = \int_{\mathbb{S}} p_{Y|X}(y|x)\ud F(x) = p_Y(y;p_X) = p_Y(y;F)
\end{align}
where we make explicit the dependence on $p_X(x)$ of the output \ac{p.m.} $p_Y(\cdot)$.\footnote{Here, and throughout the whole paper, one of the two equivalent formulations with $p_X(x)$ or $F(x)$ will be freely used as appropriately needed.}

Channel capacity is the supremum over the input \ac{p.m.} of the mutual information functional~\cite{CovTho:B91}
\begin{align} \label{MutInf}
\mathrm{I}(X;Y) & = 
\int_{\mathbb{R}} \int_{\mathbb{S}} p_{Y|X}(y|x) \log \frac{p_{Y|X}(y|x)}{p_Y(y;F)} \ud F(x) \ud y =  I(F)
\end{align}
where $\log(\cdot)$ denotes the base-2 logarithm.\footnote{In contrast, $\ln(\cdot)$ will denote the natural logarithm.}
Since only meaningless channel structure have zero capacity, we will assume channel capacity to be strictly positive and we will denote the capacity achieving (hence optimal) input \ac{p.m.} by $p_{X_0}(x)$. 
The mutual information functional can be further developed as
\begin{align} \label{I}
I(F) = H(F)-D(F)
\end{align}
where
\begin{align}
& 
H(F) \triangleq 
- \int_{\mathbb{R}} p_Y(y;F) \log p_Y(y;F) \ud y \nonumber
\end{align}
and
\begin{align}
& 
D(F) \triangleq 
- \int_{\mathbb{R}} \int_{\mathbb{S}} p_{Y|X}(y|x) \log p_{Y|X}(y|x) \ud F(x) \ud y. \nonumber
\end{align}
We can note how $D(F)$ depends in general on the input \ac{c.d.f.}, as opposed to what happens for an additive Gaussian channel (Smith's model, \cite{Smi:71}).

We also define the marginal information density and the marginal entropy density as
\begin{align}
i(x;F) \triangleq 
\int_{\mathbb{R}} p_{Y|X}(y|x) \log \frac{p_{Y|X}(y|x)}{p_Y(y;F)} \ud y \nonumber
\end{align}
and
\begin{align}
h(x;F) \triangleq 
- \int_{\mathbb{R}} p_{Y|X}(y|x) \log p_Y(y;F) \ud y \nonumber
\end{align}
respectively. These two densities are related as
\begin{align}
i(x;F)=h(x;F)-d(x) \nonumber
\end{align}
where 
\begin{align}
d(x)\triangleq - \int_{\mathbb{R}} p_{Y|X}(y|x) \log p_{Y|X}(y|x) \ud y. \nonumber
\end{align}
It is straightforward to show that the following three statements also hold:
\begin{align}\label{eq:IF}
I(F) = 
\int_{\mathbb{S}} i(x;F) \ud F(x)
\end{align}
\begin{align}\label{eq:HF}
H(F) = 
\int_{\mathbb{S}} h(x;F) \ud F(x)
\end{align}
and 
\begin{align}\label{eq:DF}
D(F) = 
\int_{\mathbb{S}} d(x) \ud F(x).
\end{align}
In this paper, \eqref{eq:IF}, \eqref{eq:HF}, and \eqref{eq:DF} are well-defined since $h(\cdot)$, $i(\cdot)$, and $d(\cdot)$ are finitely bounded under the conditions enunciated in Section \ref{sec:Channel}, as proven in Appendix \ref{sec:PrelSteps}.

\section{Framework Set Up and Main Result} \label{sec:Channel}
We consider a memoryless real scalar channel governed by a general input-output relationship in the form
\begin{align} \label{channel}
Y=\mathit{f}(X,\underline{\Theta})
\end{align}
where $X$ is the input \ac{RV} and $\underline{\Theta}$ a vector of nuisance parameters.
We do not impose further conditions on the input-output channel law $\mathit{f}(\cdot)$, which may be linear or nonlinear, additive in noise or multiplicative or both, with independent or correlated noises.

Throughout the paper, we consider a peak power constrained input \ac{RV} $X$ taking values in the bounded set (see Fig. \ref{Fig:S})
\begin{align}
\mathbb{S}=[-A,A] \cap \mathbb{A} \nonumber
\end{align}
where $[-A,A]$ is the compact real interval of radius $A$ 
and $\mathbb{A}$ represents an open subset of the complex extended input plane on which the conditional output \ac{p.m.} $p_{Y|X}(y|x)$ is analytic (hence continuous) in the input variable. 

\begin{figure}[!t]
    {\begin{center}\includegraphics[width=\columnwidth]{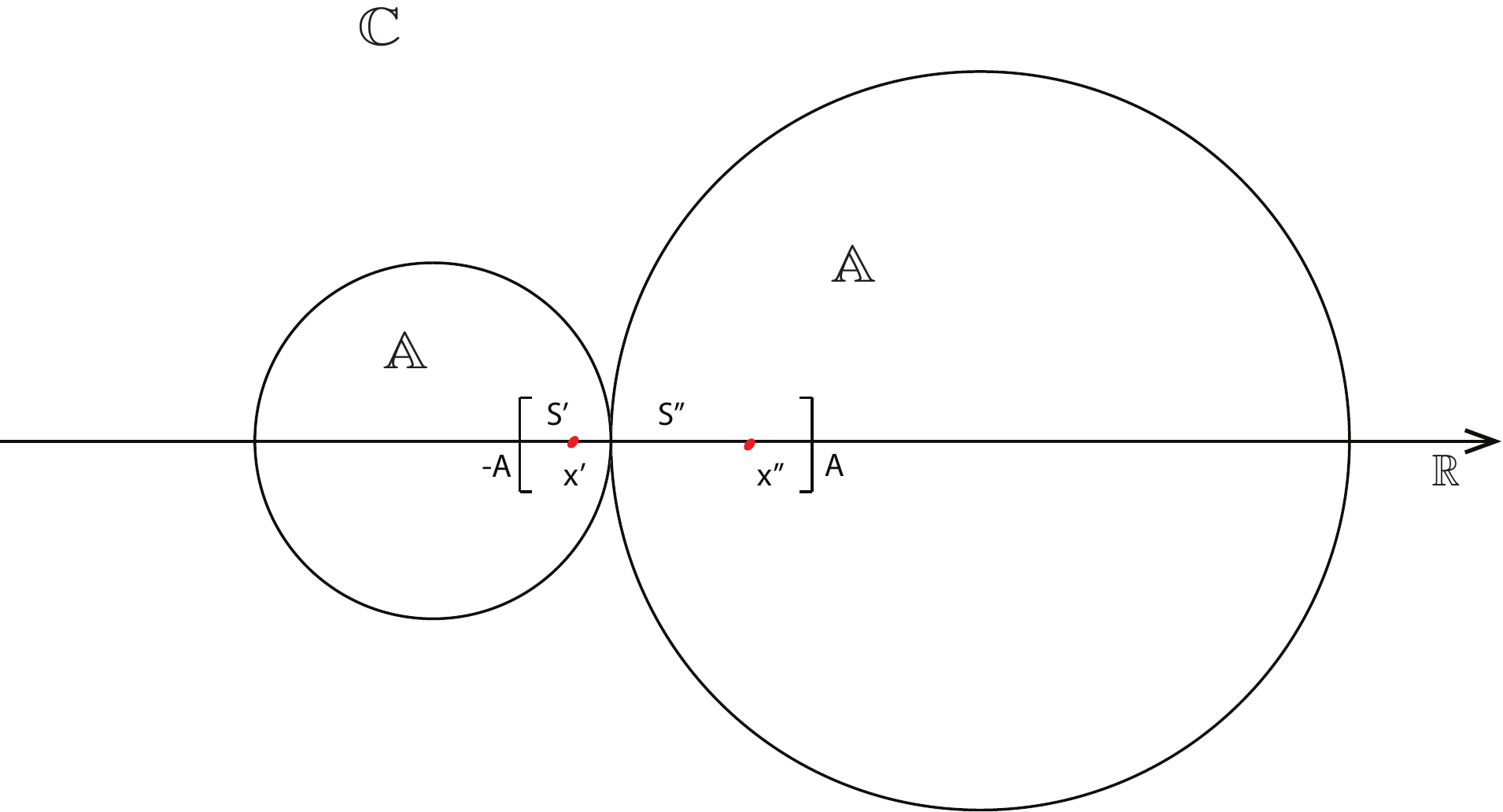}\end{center}}
\caption{Pictorial representation of the set $\mathbb S =[-A,A] \cap \mathbb{A}$ on which the input \ac{RV} takes its values.}
\label{Fig:S}
\end{figure}

The fundamental conditions on which our analysis relies may be summarized as follows:
\begin{enumerate}
\item \label{cond1}The conditional output \ac{p.m.} can be analytically extended to complex inputs, i.e., there exists an open set $\mathbb{A} \subseteq \mathbb C$ such that
\begin{align}
x \mapsto p_{Y|X}(y|x) \nonumber 
\end{align}
is an analytic map over $\mathbb{A}$, while 
\begin{align}
(x,y) \mapsto p_{Y|X}(y|x) \nonumber
\end{align}
is a continuous function over $\mathbb{A} \times \mathbb{R}$.
\item \label{cond2} There exist two functions $q(y)$ and $Q(y)$, both nonnegative, and bounded above, and integrable, such that  $\forall x \in \mathbb{S}$ we have
\begin{align} \label{qQ}
0 \le q(y)\le p_{Y|X}(y|x)\le Q(y) \le K < +\infty, \; \forall y \in \mathbb{R}
\end{align}
and the map
\begin{align}
y \mapsto Q(y)\log q(y) \nonumber
\end{align}
is integrable in $y$.
\item \label{cond3}The two integrals 
\begin{align}
&\int_{\mathbb{R}} p_{Y|X}(y|w)\log{p_{Y|X}(y|w)} \ud y \nonumber\\
&\int_{\mathbb{R}} p_{Y|X}(y|w)\log{p_{Y}(y;p_X)} \ud y \nonumber
\end{align}
are uniformly convergent (see \cite{Lan:B99} for definition) $\forall w \in \mathbb{D}$, for some $\mathbb{D}$ such that $\mathbb{S}\subset \mathbb{D} \subseteq \mathbb{A}$.\footnote{For the sake of clarity, here and elsewhere in the paper a generic input value is denoted by $x$ or $w$ whenever the input is considered strictly real or complex extended, respectively.} 
\item \label{cond4} For each of the maximally extended connected regions forming $\mathbb{S}$ (we call them $S',S'', \ldots$), one of the following three conditions holds:
\begin{enumerate}
\item \label{opt1}
there exist $x', x'', \ldots \in S', S'', \ldots$ (see Fig. \ref{Fig:S}) and corresponding \ac{c.d.f.}'s $F', F'', \ldots$ with
\begin{align} \label{x'}
\log p_{Y|X}(y|x') - \log q(y) < I(F'), \; \forall y \in \mathbb{R} 
\end{align}
and analogously for the other regions, where $I(F')$ is the mutual information between the output and input variable when the input is distributed according to $F'(x)$.
\item \label{opt2}
for all real input \ac{p.m.}'s $p_X(x)$, there exist $x', x'', \ldots \in S', S'', \ldots$ (see Fig. \ref{Fig:S}) such that $p_{Y|X}(y|x')$ is the unique conditional output \ac{p.m.} satisfying
\begin{align}
\min_{x\in S'} D_{KL}(p_{Y|X}(y|x)||p_Y(y;p_X))=D_{KL}(p_{Y|X}(y|x')||p_Y(y;p_X)) \nonumber
\end{align}
and analogously for the other regions, where $D_{KL}$ denotes the Kullback-Leibler divergence.
\item \label{opt3}
for all real input \ac{p.m.}'s $p_X(x)$, there exist pairs of distinct points $(x_1',x_2'), (x_1'',x_2''), \ldots$ $\in S'\times S', S''\times S'', \ldots$ such that
\begin{align}
D_{KL}(p_{Y|X}(y|x_1')||p_Y(y;p_X)) \ne D_{KL}(p_{Y|X}(y|x_2')||p_Y(y;p_X)) \nonumber
\end{align}
and analogously for the other regions.
\end{enumerate}
\end{enumerate}

\begin{remark}
The here stated conditions do not impose any peculiar kind of conditional output \ac{p.m.}, as it was the case in \cite{Smi:69, Smi:71, ChaHraKsc:05}, nor any particular channel law, as it was done in \cite{Tch:04}. We also underline that the input set compactness, deeply exploited in \cite{ChaHraKsc:05}, is \emph{not} a required condition here. Examples, considered in Section~\ref{sec:Examples}, further show the presented theory to extend the previously known treatments.
\end{remark}
\bigskip
We are now in a position to state the main result of this contribution.
\begin{theorem} \label{Th1}
Every real scalar and peak power constrained input channel, whose conditional output \ac{p.m.} fulfils the aforementioned conditions 1 to 4, has a finitely discrete capacity achieving input \ac{p.m.}
\end{theorem}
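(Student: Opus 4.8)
The plan is to adapt the classical programme of Smith~\cite{Smi:71} and Shamai--Bar-David~\cite{ShaBar:95} to the general channel \eqref{channel}, along the three stages that match Sections~\ref{sec:Existence}, \ref{sec:Analyticity} and \ref{sec:Input}. First I would endow $\mathscr{F}$ with the topology of weak convergence; since every admissible input is supported in the bounded set $\mathbb{S}\subseteq[-A,A]$, $\mathscr{F}$ is tight and its closure is compact. Conditions~\ref{cond1}--\ref{cond2} make $d(\cdot)$ bounded and continuous on $\mathbb{S}$ and make $y\mapsto p_Y(y;F)$ depend continuously on $F$ within the integrable envelope $[q(y),Q(y)]$, so $D(F)=\int_{\mathbb{S}}d(x)\,\ud F(x)$ is weakly continuous and, by dominated convergence, so is $H(F)$; hence $I(F)=H(F)-D(F)$ is weakly continuous and attains a maximum at some $F_0$ (equivalently $p_{X_0}$). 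Since $p_Y(\cdot;F)$ is affine in $F$, $H$ concave and $D$ linear, $I$ is concave on the convex set $\mathscr{F}$. Differentiating $I\big((1-\theta)F_0+\theta F\big)$ at $\theta=0^{+}$ — the interchange of the limit with the $y$-integration being exactly what the uniform convergence of Condition~\ref{cond3} provides — gives the one-sided directional derivative $\int_{\mathbb{S}}i(x;F_0)\,\ud F(x)-I(F_0)$, so optimality of $F_0$ forces $i(x;F_0)\le I(F_0)$ for every $x\in\mathbb{S}$ (taking $F$ to be the point mass at $x$) and, together with \eqref{eq:IF}, $i(x;F_0)=I(F_0)$ at every point of the support of $F_0$.

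Next I would show that $w\mapsto i(w;F_0)=h(w;F_0)-d(w)$ continues analytically to an open set containing $\mathbb{S}$ inside $\mathbb{D}$. In $h(w;F_0)=-\int_{\mathbb{R}}p_{Y|X}(y|w)\log p_Y(y;F_0)\,\ud y$ the variable $w$ enters only through the analytic factor $p_{Y|X}(y|w)$ of Condition~\ref{cond1}, while $\log p_Y(y;F_0)$ is a fixed function of $y$ controlled by $\max\{|\log q(y)|,|\log Q(y)|\}$ and integrable against $Q(y)$ by Condition~\ref{cond2}; in $d(w)=-\int_{\mathbb{R}}p_{Y|X}(y|w)\log p_{Y|X}(y|w)\,\ud y$ the integrand is analytic in $w$ away from the isolated zeros of $p_{Y|X}(y|\cdot)$ and stays bounded across them. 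In both cases I would invoke Morera's theorem: Condition~\ref{cond3} allows exchanging $\oint_{\gamma}$ with $\int_{\mathbb{R}}\ud y$ for every triangle $\gamma\subset\mathbb{D}$, the inner contour integrals vanish by Cauchy's theorem, and the same uniform convergence gives continuity of $w\mapsto i(w;F_0)$, so $i(\cdot;F_0)$ is analytic. The supporting facts — finiteness of $h,i,d$, legitimacy of differentiating under the integral sign, and the Morera/Fubini step — are those collected in Appendices~\ref{sec:PrelSteps}, \ref{sec:Der} and \ref{sec:Theorems}.

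Finally, suppose for contradiction that the support of $F_0$ is infinite. As a closed subset of the bounded set $\mathbb{S}$ it is compact, hence has an accumulation point $x_0$, which lies in some maximally extended connected region $S'$ of $\mathbb{S}$ with infinitely many support points accumulating at $x_0$ inside $S'$. Applying the identity theorem on the connected component (of the open set of analyticity) containing $S'$, $i(\cdot;F_0)-I(F_0)$ vanishes there identically, so $i(x;F_0)\equiv I(F_0)$ on $S'$. Condition~\ref{cond4} rules this out. If \ref{opt1} holds for $S'$, then at the corresponding $x'$ one computes, using $p_Y(y;F_0)\ge q(y)$ and the strict pointwise bound \eqref{x'},
\begin{align}
i(x';F_0)&=\int_{\mathbb{R}}p_{Y|X}(y|x')\log\frac{p_{Y|X}(y|x')}{p_Y(y;F_0)}\,\ud y\nonumber\\
&\le\int_{\mathbb{R}}p_{Y|X}(y|x')\big(\log p_{Y|X}(y|x')-\log q(y)\big)\,\ud y<I(F')\le I(F_0),\nonumber
\end{align}
contradicting $i(x';F_0)=I(F_0)$; and if \ref{opt2} or \ref{opt3} holds, apply it with $p_X=p_{X_0}$ and recall $i(x;F_0)=D_{KL}(p_{Y|X}(\cdot|x)||p_Y(\cdot;F_0))$, so that constancy of $i(\cdot;F_0)$ on $S'$ makes every point of $S'$ a minimiser of this divergence, contradicting the uniqueness in \ref{opt2}, and forces $i(x_1';F_0)=i(x_2';F_0)$, contradicting the strict inequality in \ref{opt3}. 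In every case we reach a contradiction, so the support of $F_0$ is finite, which is Theorem~\ref{Th1}. I expect the analytic continuation of the second stage to be the main obstacle — keeping $\log p_{Y|X}(y|w)$ single-valued across the zeros of $p_{Y|X}(y|\cdot)$ and rigorously justifying the Morera/Fubini exchange — which is precisely what Conditions~\ref{cond1}--\ref{cond3} are tailored to secure.
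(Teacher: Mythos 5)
Your proposal is correct and follows essentially the same route as the paper: existence and the Kuhn--Tucker conditions via the optimization theorem on the compact convex set $\mathscr{F}$, analytic continuation of $i(\cdot;F_0)$ over $\mathbb{D}$ (your Morera/Fubini argument is in substance the Differentiation Lemma the paper invokes), and the Bolzano--Weierstrass/identity-theorem contradiction played off against each of the three options of condition 4. The only minor nuance is that the limit/integral interchange in the weak-derivative computation is justified in the paper by dominated convergence using the integrable envelope built from $q(y)$, $Q(y)$ of condition 2 rather than by the uniform convergence of condition 3, which does not affect the structure of your argument.
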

\medskip
The remainder of this paper is devoted to prove Theorem \ref{Th1}. The proof requires some intermediate steps: In particular, Section~\ref{sec:Existence} proves that the capacity achieving input \ac{p.m.} exists and also states, as a corollary, Kuhn-Tucker's conditions on the marginal information density (defined in Section~\ref{sec:Prel}) for an input \ac{p.m.} to be optimal. Section~\ref{sec:Analyticity} proves the analyticity of the marginal information density which is exploited in Section~\ref{sec:Input}, alongside the corollary statement, to finally prove the finitely discrete nature of the capacity achieving input \ac{p.m.} support. Besides, Section~\ref{sec:Unique} hints in the direction of proving uniqueness of the optimal input \ac{p.m.}\footnote{Uniqueness was not proved in general neither in \cite{Tch:04} nor in \cite{ChaHraKsc:05}.}

\section{Existence of a Capacity Achieving Input \ac{p.m.}} \label{sec:Existence}
Following the approach in \cite{Smi:71, Smi:69}, in this section we demonstrate that an \emph{optimal} input \ac{p.m.} exists and that Kuhn-Tucker's conditions are necessary and sufficient for optimality. Some basic results in optimization theory are first reviewed \cite{Smi:69, Smi:71, Lue:B69}.

A map $f: \Omega \mapsto \mathbb{R}$, where $\Omega$ is a convex space, is said to be weakly differentiable in $\Omega$ if, for $\theta \in [0,1]$ and $x_0 \in \Omega$, the map $f'_{x_0}:\Omega \rightarrow \mathbb{R}$, defined as
\begin{align}
f'_{x_0}(x) = \lim\limits_{\theta \rightarrow 0} \frac{f[(1-\theta)x_0 + \theta x] - f(x_0)}{\theta} \nonumber
\end{align}
exists for all $x$ and $x_0$ in $\Omega$.
Besides, $f$ is said to be
concave if, for all $\theta \in [0,1]$ and for all $x$ and $x_0$ in $\Omega$,
\begin{align}
f[(1-\theta)x_0 + \theta x] \ge (1-\theta)f(x_0) + \theta f(x). \nonumber
\end{align}

\medskip
\begin{theorem}[Optimization Theorem \cite{Lue:B69}]\label{Th:Op}
Let $f$ be a continuous, weakly differentiable, and
concave map from a compact, convex topological space $\Omega$ to $\mathbb{R}$, and define
\begin{align}
C \triangleq \sup_{x \in \Omega} f(x)\, . \nonumber
\end{align}
Then:
\begin{enumerate}
\item $C = \max_{x \in \Omega} f(x) = f(x_0)$ for some $x_0 \in \Omega$;
\item $f(x_0)=C$ if and only if $f'_{x_0}(x) \le 0 \;\, \forall x \in \Omega$.
\end{enumerate}
\end{theorem}

\medskip
Exploiting the above results from optimization theory, we have the following proposition.

\medskip
\begin{proposition}\label{prop1}
Let $I(F)$ be the mutual information functional between $X$ and $Y$, as defined in \eqref{MutInf}. Then, under an input peak power constraint and conditions 1 and 2 of Sec. \ref{sec:Channel}, there exists an $F_0 \in \mathscr{F}$ (equivalently a $p_{X_0} \in \mathscr{P}$) such that
\begin{align}
C = I(F_0) = \max_{F \in \mathscr{F}} I(F). \nonumber
\end{align}
Moreover, a necessary and sufficient condition for the input \ac{c.d.f.} $F_0$ to maximize $I(F)$, i.e.,
to achieve capacity, is
\begin{align}\label{WeakDer}
\int\limits_{\mathbb{S}} i(x;F_0)\ud F(x) \le I(F_0), \quad \forall F \in \mathscr{F}.
\end{align}
\end{proposition}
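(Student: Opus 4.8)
The plan is to apply the Optimization Theorem (Theorem~\ref{Th:Op}) to the mutual information functional $I(F)$ on the space $\mathscr F$ of input \ac{c.d.f.}'s supported on $\mathbb S$. To do this I must verify the three hypotheses of that theorem for $I:\mathscr F\to\mathbb R$: that $\mathscr F$ is a compact convex topological space, and that $I$ is concave, weakly differentiable, and continuous on it. Convexity of $\mathscr F$ is immediate since a convex combination of \ac{c.d.f.}'s is a \ac{c.d.f.} Compactness is where peak power (boundedness of $\mathbb S$) enters: since $\mathbb S\subseteq[-A,A]$ is a bounded (and I would take its closure, compact) subset of $\mathbb R$, the set of probability measures on it is compact in the weak-$*$ topology by (a version of) Helly's selection theorem / Prohorov's theorem, exactly as in \cite{Smi:71}. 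I would equip $\mathscr F$ with this topology of weak convergence (Lévy metric).

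Next, concavity of $I(F)$: using the decomposition $I(F)=H(F)-D(F)$ from \eqref{I}, note that $D(F)=\int_{\mathbb S} d(x)\,\ud F(x)$ by \eqref{eq:DF} is \emph{linear} in $F$, while $H(F)=-\int_{\mathbb R} p_Y(y;F)\log p_Y(y;F)\,\ud y$ is concave in $F$ because $p_Y(y;F)$ is linear (affine) in $F$ for each $y$ by \eqref{p_Y} and $t\mapsto -t\log t$ is concave; integrating a concave function of an affine image preserves concavity. Hence $I=H-D$ is concave. For continuity, I would show that if $F_n\to F$ weakly then $p_Y(y;F_n)\to p_Y(y;F)$ pointwise (using condition~\ref{cond1}, continuity of $x\mapsto p_{Y|X}(y|x)$), and then pass to the limit in the $y$-integral defining $I$ using dominated convergence, where the domination is furnished by condition~\ref{cond2}: the bounds $q(y)\le p_Y(y;F)\le Q(y)\le K$ together with integrability of $Q$, of $Q(y)\log q(y)$, and boundedness of $-t\log t$ on $[0,K]$ give an integrable majorant uniform in $F$ (this is essentially the content of Appendix~\ref{sec:PrelSteps}, which I may invoke). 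Weak differentiability: for $F_\theta=(1-\theta)F_0+\theta F$ one computes $\frac{\ud}{\ud\theta}I(F_\theta)\big|_{\theta=0^+}$ by differentiating under the integral sign — again justified by the condition~\ref{cond2} bounds — and obtains the standard Gâteaux derivative
\begin{align}
I'_{F_0}(F)=\int_{\mathbb S} i(x;F_0)\,\ud F(x)-I(F_0), \nonumber
\end{align}
where the key intermediate identity is $\int_{\mathbb R}\!\int_{\mathbb S} p_{Y|X}(y|x)\log\frac{p_Y(y;F_0)}{p_Y(y;F_0)}\cdots$, more precisely that the cross term $-\int_{\mathbb R}\big(\int p_{Y|X}(y|x)\ud F(x)\big)\log p_Y(y;F_0)\,\ud y = \int_{\mathbb S} h(x;F_0)\,\ud F(x)$ together with the linear term $-\int_{\mathbb S} d(x)\ud F(x)$ and the $-I(F_0)$ coming from differentiating $H(F_\theta)$'s "self" part reassemble into $\int i(x;F_0)\ud F(x)-I(F_0)$.

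With all three hypotheses checked, Theorem~\ref{Th:Op} yields part (1), existence of a maximizer $F_0$ (equivalently $p_{X_0}$), and part (2) gives the optimality criterion $I'_{F_0}(F)\le 0$ for all $F\in\mathscr F$, which upon substituting the computed derivative is exactly \eqref{WeakDer}. I expect the main obstacle to be the rigorous justification of the interchanges of limit/derivative and integral — establishing continuity and weak differentiability of $I$ — since this is precisely where conditions~\ref{cond1} and~\ref{cond2} must be used carefully to produce integrable dominating functions; the concavity and compactness arguments are comparatively routine adaptations of Smith's.
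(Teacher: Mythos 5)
Your proposal is correct and follows essentially the same route as the paper: verify the hypotheses of the Optimization Theorem (convexity and compactness of $\mathscr{F}$ via Helly/L\`evy, continuity, concavity, and weak differentiability of $I$, the latter two resting on the bounds of condition 2 exactly as in the paper's Appendices), then read off \eqref{WeakDer} from the weak derivative $\int_{\mathbb{S}} i(x;F_0)\,\ud F(x)-I(F_0)$. The only cosmetic deviations are proving concavity of $H$ from concavity of $t\mapsto -t\log t$ rather than via Gibbs' inequality, and arguing continuity by pointwise convergence of $p_Y(y;F_n)$ plus dominated convergence rather than by invoking Helly--Bray on $i(x;F)$; both are equivalent in substance.
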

\noindent 
\begin{proof} As from Theorem~\ref{Th:Op}, it suffices to show that $\mathscr{F}$ 
is convex and compact in some topology and that $I:\mathscr{F} \mapsto \mathbb{R}$ 
is continuous, concave and weakly differentiable. The necessary and sufficient condition \eqref{WeakDer} also follows from Theorem \ref{Th:Op}, as it will be shown.

\paragraph{Convexity and Compactness} The convexity of $\mathscr{F}$, i.e. the fact that
\begin{align}
F_{\theta}(x)=(1-\theta)F_{1}(x)+\theta F_{2}(x) \nonumber
\end{align} 
still belongs to $\mathscr{F}$ for each $F_{1}$, $F_{2}$ in $\mathscr{F}$ and for each $\theta \in [0,1]$, 
is immediate.
The compactness of $\mathscr{F}$ in the L\`evy metric\footnote{The corresponding distance is here indicated with $d(\cdot,\cdot)$.} topology (as defined in \cite{Smi:69}) follows from Helly's Weak Compactness Theorem (see Appendix \ref{sec:Theorems}) and from the fact that convergence in the L\`evy metric is equivalent to complete convergence \cite{Mor:B68}, which on a bounded interval 
is equivalent to weak convergence. 

\paragraph{Continuity} The continuity of functional $I(F)$ 
descends from the Helly-Bray Theorem (see Appendix \ref{sec:Theorems}), 
according to which $d(F_n,F) \xrightarrow[n]{} 0$ implies $I(F_n) \xrightarrow[n]{} I(F)$, 
provided the boundedness and continuity in $x$ of $i(x;F)$. The latter two properties are demonstrated in Appendix~\ref{sec:PrelSteps} (continuity of $i(x;F)$ is a consequence also of analyticity discussed in Section~\ref{sec:Analyticity}).

\paragraph{Concavity} For what concerns $I(F)$ 
being concave, we can note how
\begin{align}
p_Y(y;F_\theta) & = p_Y(y;(1-\theta)F_1 + \theta F_2) \nonumber\\
& = \int_{\mathbb{S}} p_{Y|X}(y|x) [(1-\theta)\ud F_1(x) + \theta \ud F_2(x)] = (1-\theta) p_Y(y;F_1) + \theta p_Y(y;F_2) \nonumber
\end{align}
and 
\begin{align}\label{Dtheta}
D \left( (1-\theta) F_1 + \theta F_2 \right) &= 
- \!\!\int_{\mathbb{R}}\! \int_{\mathbb{S}}\!\! p(y|x) \log p(y|x) [(1-\theta) \ud F_1(x) + \theta \ud F_2(x)] \ud y \nonumber \\
&= (1-\theta)D(F_1) + \theta D(F_2).
\end{align}
Hence, we have that
\begin{align}
I((1-\theta) F_1 + \theta F_2) \ge (1-\theta)I(F_1) + \theta I(F_2)\nonumber
\end{align}
is equivalent, from \eqref{I} and \eqref{Dtheta}, to
\begin{align}\label{H}
H((1-\theta) F_1 + \theta F_2) \ge (1-\theta) H(F_1) + \theta H(F_2).
\end{align}
Inequality \eqref{H} may be proved as follows:
\begin{align}
H((1-\theta) F_1 + \theta F_2) &= 
- \int_{\mathbb{R}}\! p_Y(y;(1\!-\!\theta) F_1\! + \!\theta F_2) \log p_Y(y;F_\theta) \ud y \nonumber \\
& = - \int_{\mathbb{R}} \left[ (1-\theta) p_Y(y;F_1) + \theta p_Y(y;F_2) \right] \log p_Y(y;F_\theta) \ud y \nonumber \\
&  \overset{\mathrm{(a)}}{\ge} -(1-\theta) \!\int_{\mathbb{R}} p_Y(y;F_1) \log p_Y(y;F_1) \ud y 
- \theta \!\int_{\mathbb{R}} p_Y(y;F_2) \log p_Y(y;F_2) \ud y \nonumber \\
& = (1-\theta) H(F_1) + \theta H(F_2) \nonumber
\end{align}
where $\mathrm{(a)}$ exploits Gibbs' inequality \cite{CovTho:B91}, which states that for any two random variables, $Z_1$ and $Z_2$, we have
\begin{align}
-\int_{\mathbb{R}} p_{Z_1}(z) \log p_{Z_1}(z) \ud z \le -\int_{\mathbb{R}} p_{Z_1}(z) \log p_{Z_2}(z) \ud z \nonumber
\end{align}
with equality if and only if 
\begin{align}
p_{Z_1}(z)= p_{Z_2}(z). \nonumber
\end{align}
Hence, concavity of $I(\cdot)$ is proven and equality holds if and only if $p_Y(y;F_1)= 
p_Y(y;F_2)$. 

\paragraph{Weak Differentiability}As proven in Appendix~\ref{sec:Der}, for arbitrary $F_1$ and $F_2$  in $\mathscr{F}$  we have
\begin{align} \label{Derivative}
& \phantom{=\;} \lim_{\theta \to 0} \frac{ I((1-\theta)F_1 + \theta F_2) - I(F_1) } {\theta} 
= \int_{\mathbb{S}} i(x;F_1) \ud F_2(x) - I(F_1).
\end{align}
The proof of weak differentiability is completed by observing that $i(x;F)$ is finitely bounded (Appendix \ref{sec:PrelSteps}), which guarantees the existence of the integral in the right-hand side of \eqref{Derivative}.

\bigskip
Since all hypotheses of Theorem~\ref{Th:Op} are satisfied, the optimal input \ac{p.m.} exists in $\mathscr{P}$. 
Furthermore, from \eqref{Derivative}, it is immediate to derive the necessary and sufficient condition \eqref{WeakDer}. 
\end{proof}

\medskip
The following corollary of Proposition \ref{prop1} states the Kuhn-Tucker's conditions that will be used in Section~\ref{sec:Input} to prove the final result.

\medskip
\begin{corollary}[Kuhn-Tucker's Conditions] \label{cor} \label{subsec:Corollary}
Let $p_{X_0}$ be an arbitrary \ac{p.m.} in $\mathscr{P}$. Let $S_0$ denote the set of mass points of $p_{X_0}$ on $\mathbb{S}$.\footnote{The set $S_0$ is defined independently of the discreteness or continuity of the input \ac{p.m.}} Then $p_{X_0}$ is optimal if and only if
\begin{align}
\begin{cases}
& i(x;p_{X_0}) \le I(p_{X_0}), \quad \forall x \; \in \mathbb{S}\\ 
& i(x;p_{X_0}) = I(p_{X_0}), \quad \forall x \; \in S_0 \nonumber
\end{cases}
\end{align}
\end{corollary}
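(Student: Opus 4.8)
The plan is to derive the Kuhn-Tucker conditions directly from the variational inequality \eqref{WeakDer} established in Proposition \ref{prop1}, using the freedom to choose the competing \ac{c.d.f.} $F$ appropriately. First I would observe that optimality of $p_{X_0}$ is equivalent, by Proposition \ref{prop1}, to
\begin{align}
\int_{\mathbb{S}} i(x;p_{X_0})\,\ud F(x) \le I(p_{X_0}), \quad \forall F \in \mathscr{F}. \nonumber
\end{align}
Taking $F$ to be the degenerate (point-mass) distribution concentrated at an arbitrary $x \in \mathbb{S}$ — which is a legitimate element of $\mathscr{F}$ — immediately yields $i(x;p_{X_0}) \le I(p_{X_0})$ for all $x \in \mathbb{S}$, i.e., the first condition. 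This direction uses only that point masses lie in $\mathscr{F}$ and that $i(x;p_{X_0})$ is well-defined and bounded (Appendix \ref{sec:PrelSteps}).

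Next I would establish the second condition, the equality on the support $S_0$. Integrating the already-proven inequality $i(x;p_{X_0}) \le I(p_{X_0})$ against $p_{X_0}$ itself gives
\begin{align}
\int_{\mathbb{S}} i(x;p_{X_0})\,\ud F_0(x) \le I(p_{X_0}), \nonumber
\end{align}
but by \eqref{eq:IF} the left-hand side equals $I(p_{X_0})$ exactly, so the inequality is in fact an equality. Hence $\int_{\mathbb{S}} \bigl[ I(p_{X_0}) - i(x;p_{X_0}) \bigr]\,\ud F_0(x) = 0$ with a nonnegative integrand; therefore $i(x;p_{X_0}) = I(p_{X_0})$ for $p_{X_0}$-almost every $x$. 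To upgrade ``almost every'' to ``every point of $S_0$'' I would invoke the continuity of $x \mapsto i(x;p_{X_0})$ (Appendix \ref{sec:PrelSteps}, also following from the analyticity in Section \ref{sec:Analyticity}): if $x^\star \in S_0$ is a mass point, then either $x^\star$ carries positive probability — in which case it cannot be in the exceptional null set — or $x^\star$ is an accumulation point of points carrying the equality, and continuity forces $i(x^\star;p_{X_0}) = I(p_{X_0})$ there as well. Since $S_0$ is the support of $p_{X_0}$, every $x^\star \in S_0$ is of one of these two types, completing this direction.

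For the converse, I would suppose the two Kuhn-Tucker conditions hold and recover \eqref{WeakDer}. For any $F \in \mathscr{F}$, integrating $i(x;p_{X_0}) \le I(p_{X_0})$ against $\ud F(x)$ directly gives $\int_{\mathbb{S}} i(x;p_{X_0})\,\ud F(x) \le I(p_{X_0})$, which is precisely \eqref{WeakDer}; by Proposition \ref{prop1} this makes $p_{X_0}$ optimal. (The equality condition on $S_0$ is not even needed for the converse; it is part of the characterization only because it is automatically forced by optimality, as shown above, and it is the form in which the condition is convenient to apply in Section \ref{sec:Input}.)

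The main obstacle I anticipate is the passage from the $p_{X_0}$-almost-everywhere equality to equality at every point of the support $S_0$ — this is where the topological content lives, and it relies essentially on the continuity (indeed analyticity) of the marginal information density, which is exactly why that regularity was singled out in Appendix \ref{sec:PrelSteps} and Section \ref{sec:Analyticity}. Everything else is a routine specialization of the variational inequality to point masses and an application of \eqref{eq:IF}; care is only needed to confirm that point-mass \ac{c.d.f.}'s genuinely belong to $\mathscr{F}$ (true, since $\mathscr{F}$ is the class of \emph{all} \ac{c.d.f.}'s on $\mathbb{S}$) and that all the integrals in play are finite, which is guaranteed by condition 2 via the bounds in \eqref{qQ}.
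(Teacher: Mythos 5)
Your proposal is correct, and it is essentially the argument the paper relies on: the paper's ``proof'' simply defers to Smith's derivation, which is exactly this specialization of the variational inequality \eqref{WeakDer} to point-mass \ac{c.d.f.}'s for the pointwise bound, followed by the $\int i\,\ud F_0 = I(F_0)$ identity \eqref{eq:IF} plus the continuity of $i(\cdot;p_{X_0})$ (Appendix~\ref{sec:PrelSteps}) to force equality on the support, and the trivial integration step for the converse. Your handling of the passage from $p_{X_0}$-almost-everywhere equality to equality at every point of $S_0$ is the right place to spend the care, and your argument there is sound.
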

\begin{proof} Even if Proposition \ref{prop1} requires a different demonstration, the here stated corollary can be proved in the same way as done in \cite{Smi:69,Smi:71}.
\end{proof}

\section{Analyticity of $i(w;p_X)$} \label{sec:Analyticity}
In this section we prove that $i(x;p_X)$ can be analytically extended to $i(w;p_X)$, $\forall w \in \mathbb{D}$. This step is necessary as a starting point for the capacity achieving input \ac{p.m.} characterization in Section~\ref{sec:Input}.

First, we extend $i(x;p_X)$ to the analyticity region $\mathbb{A}$ of $x\mapsto p_{Y|X}(y|x)$ as
\begin{align}
i(w;p_X)\triangleq \int\limits_{\mathbb{R}} p_{Y|X}(y|w) \log \frac{p_{Y|X}(y|w)}{p_Y(y;p_X)} \ud y \nonumber
\end{align}
$\forall w \in \mathbb{A}$ where convergence holds.\footnote{Convergence is guaranteed inside $\mathbb{S}$, as proven in Appendix \ref{sec:PrelSteps}.} We now apply the Differentiation Lemma (see Appendix \ref{sec:Theorems}, with $I=\mathbb{R}$, $U=\mathbb{D}$), to the functions
\begin{align}
&f_1(w,y)=p_{Y|X}(y|w) \log p_{Y|X}(y|w), \nonumber\\
&f_2(w,y)=p_{Y|X}(y|w) \log p_{Y}(y;p_X). \nonumber
\end{align}
The two functions are continuous (see Section~\ref{sec:Channel}) over $\mathbb{D}\times \mathbb{R}$.\footnote{$\mathbb{D}$ has to exclude the possibility for $p_{Y|X}(y|w)$ to be real negative valued, this to ensure continuity of the principal value complex logarithm.} Moreover, from conditions in Section~\ref{sec:Channel}, they are uniformly integrable over $\mathbb{R}$ and, 
being compositions of analytic functions, they are analytic. 
The difference of the two analytic (from Differentiation Lemma) integral functions
\begin{align}
\int\limits_{\mathbb{R}} f_1(y,w) \ud y - \int\limits_{\mathbb{R}} f_2(y,w) \ud y \nonumber
\end{align}
is analytic 
on $\mathbb{D}$. This means that $i(w;p_X)$ is an analytic function over $\mathbb{D}$.

\section{Capacity Achieving Input \ac{p.m.} Characterization} \label{sec:Input}
In this section we finally prove the finite discreteness of the capacity achieving input \ac{p.m.}.\newline
Define $v(w)$ as\footnote{Recall that a generic input value is denoted by $x$ and $w$ when the input is considered strictly real or complex extended, respectively.}
\begin{align}\label{v(w)}
v(w) &\triangleq \int\limits_{\mathbb{R}} p_{Y|X}(y|w) \left[-\log{\left( \frac{p_Y(y;p_{X_0})}{p_{Y|X}(y|w)} \right)} - I(p_{X_0})\right]\ud y \nonumber\\
& = i(w;p_{X_0})-I(p_{X_0})
\end{align}
where $p_{X_0}(x)$ is a capacity achieving input \ac{p.m.} Recall from Section~\ref{sec:Channel} that $S', S'', \ldots$ are the maximally extended connected regions forming $\mathbb{S}$, while $S_0', S_0'', \ldots$ is the corresponding decomposition for $S_0$ (the support of $p_{X_0}(x)$), i.e., $S_0'$ is the set of points of $S_0$ in $S'$, $S_0''$ is the set of points of $S_0$ in $S''$, and so on. Note that, if each of the optimal input domain decomposition sets were not finitely discrete, then, for the Bolzano-Weierstrass Theorem, it would have an accumulation point in the corresponding connected subregion of $\mathbb{S}$ and thus, by the identity principle of analytic functions and Corollary \ref{cor}, $v(w)=0$ in that entire subregion. From \eqref{v(w)}, $v(w)=0$ means
\begin{align}
-\int\limits_{\mathbb{R}} p_{Y|X}(y|w) \log{\left( \frac{p_Y(y;p_{X_0})}{p_{Y|X}(y|w)} \right)} \ud y -I(p_{X_0})=0. \nonumber
\end{align}
In the following, for notation convenience, suppose to consider the $S'$ subregion of $\mathbb{S}$.

In case one of the first two options \ref{opt1}, \ref{opt2} presented in Section~\ref{sec:Channel} is verified and since $v(w)=0$ on the entire considered subregion, we must have:
\begin{align}
v(x') \! = \! -\int\limits_{\mathbb{R}} p_{Y|X}(y|x') \log{\left( \frac{p_Y(y;p_{X_0})}{p_{Y|X}(y|x')} \right)} \ud y -I(p_{X_0})=0 \nonumber
\end{align}
also for the corresponding particular value $x'$, whose existence was supposed in Section~\ref{sec:Channel}. However this is in clear contradiction with either
\begin{align}
v(x') & =  \int\limits_{\mathbb{R}} p_{Y|X}(y|x') \log{\left( \frac{p_{Y|X}(y|x')}{p_Y(y;p_{X_0})} \right)} \ud y -I(p_{X_0}) \nonumber\\
&\le \int\limits_{\mathbb{R}} p_{Y|X}(y|x') \underbrace{\bigg( \log{p_{Y|X}(y|x')} - \log{q(y)} \bigg)}_{< I(F') \; \mathrm{see \; eq.}\eqref{x'}} \ud y -I(p_{X_0}) < I(F')-I(F_0) \le 0. \nonumber
\end{align}
or
\begin{align}
v(x') & =  \int\limits_{\mathbb{R}} p_{Y|X}(y|x') \log{\left( \frac{p_{Y|X}(y|x')}{p_Y(y;p_{X_0})} \right)} \ud y -I(p_{X_0}) \nonumber\\
&= D_{KL}(p_{Y|X}(y|x')||p_Y(y;p_{X_0})) - I(p_{X_0}) < D_{KL}(p_{Y|X}(y|x)||p_Y(y;p_{X_0}))-I(p_{X_0}) = 0. \nonumber
\end{align}
If vice versa the third option \ref{opt3} holds, it follows
\begin{align}
v(x_1')= D_{KL}(p_{Y|X}(y|x_1')||p_Y(y;p_{X_0})) - I(p_{X_0}) \ne D_{KL}(p_{Y|X}(y|x_2')||p_Y(y;p_{X_0})) - I(p_{X_0}) = 0 \nonumber
\end{align}
and again a contradiction occurs.

This finally proves that the hypothesis to have an infinite set of mass points $S_0$ was wrong, hence the input \ac{RV} $X$ can take only on a finitely discrete set of values.

\section{About Uniqueness}\label{sec:Unique}
The so far developed conditions on the capacity achieving input \ac{p.m.} do not guarantee also its uniqueness. In this direction, a further property that all eventual optimal input \ac{p.m.}'s must satisfy with respect to any other capacity achieving \ac{p.m.} can be outlined.\newline Consider all the optimal input \ac{p.m.}'s\footnote{In the previous sections, we proved that they belong to $\mathscr{P'}$, the restriction of $\mathscr{P}$ to the class of finitely discrete generalized functions defined on a finite number of probability mass points in the input support $\mathbb{S}$.} and denote the $i$-th of them by $p_{X_i}(x)$. Then, the following proposition holds.
\begin{proposition}\label{prop2}
All the optimal input \ac{p.m.}'s of a channel model satisfying conditions 1-4 in Section~\ref{sec:Channel}, must fulfil the condition
\begin{align}
i(x;p_{X_0})=I(p_{X_0}), \quad \forall x \in S_i \nonumber
\end{align}
$S_i \subset \mathbb{S}$ being the support of $p_{X_i}(x)$. 
\end{proposition}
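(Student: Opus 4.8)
The plan is to combine the concavity of the mutual information functional with the weak-derivative identity \eqref{Derivative} and the Kuhn-Tucker characterization of Corollary \ref{cor}. Fix two capacity achieving input \ac{p.m.}'s $p_{X_0}$ and $p_{X_i}$, with associated \ac{c.d.f.}'s $F_0$ and $F_i$, so that $I(F_0)=I(F_i)=C$. First I would form the convex combination $F_\theta=(1-\theta)F_0+\theta F_i$, which lies in $\mathscr{F}$ by the convexity shown in the proof of Proposition \ref{prop1}. By the concavity of $I(\cdot)$ established there, $I(F_\theta)\ge(1-\theta)I(F_0)+\theta I(F_i)=C$, while $I(F_\theta)\le C$ since $C=\max_{F\in\mathscr{F}}I(F)$. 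Hence $\theta\mapsto I(F_\theta)$ is identically equal to $C$ on $[0,1]$, and in particular its right derivative at $\theta=0$ vanishes.

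Next I would compute that derivative through \eqref{Derivative} with $F_1=F_0$ and $F_2=F_i$, which yields
\begin{align}
0=\lim_{\theta\to 0}\frac{I\bigl((1-\theta)F_0+\theta F_i\bigr)-I(F_0)}{\theta}=\int_{\mathbb{S}} i(x;p_{X_0})\,\ud F_i(x)-I(p_{X_0}), \nonumber
\end{align}
so that $\int_{\mathbb{S}} i(x;p_{X_0})\,\ud F_i(x)=I(p_{X_0})$. By Corollary \ref{cor} applied to the optimal $p_{X_0}$ we also have $i(x;p_{X_0})\le I(p_{X_0})$ for every $x\in\mathbb{S}$. A function bounded above by $I(p_{X_0})$ can integrate against the probability measure $p_{X_i}$ to exactly $I(p_{X_0})$ only if it equals $I(p_{X_0})$ at $p_{X_i}$-almost every point; since by Theorem \ref{Th1} the support $S_i$ consists of finitely many mass points each of positive probability, this forces $i(x;p_{X_0})=I(p_{X_0})$ at every $x\in S_i$, which is the assertion. (Equivalently, one may invoke the continuity of $x\mapsto i(x;p_{X_0})$ from Section \ref{sec:Analyticity}: the set $\{x:i(x;p_{X_0})=I(p_{X_0})\}$ is then closed and of full $p_{X_i}$-measure, hence contains $S_i$.)

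The argument is largely mechanical; the two points deserving attention are the legitimacy of differentiating $\theta\mapsto I(F_\theta)$ at $\theta=0$ — granted because \eqref{Derivative} is proved in Appendix \ref{sec:Der} for arbitrary $F_1,F_2\in\mathscr{F}$ and because $i(\cdot;p_{X_0})$ is finitely bounded, so the integral on the right-hand side is well defined — and the passage from the integral identity to the pointwise identity on $S_i$, handled as above. I do not expect a substantial obstacle here. By the symmetric roles of $p_{X_0}$ and $p_{X_i}$ one additionally obtains $i(x;p_{X_i})=I(p_{X_i})$ for all $x\in S_0$, although only the stated direction is required.
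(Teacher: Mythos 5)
Your proposal is correct and takes essentially the same route as the paper's proof: concavity makes the mixture $(1-\theta)F_0+\theta F_i$ capacity achieving, so the weak derivative \eqref{Derivative} vanishes, and combining $\int_{\mathbb{S}} i(x;p_{X_0})\,\ud F_i(x)=I(p_{X_0})$ with the Kuhn--Tucker bound $i(x;p_{X_0})\le I(p_{X_0})$ of Corollary \ref{cor} forces equality on the finitely many atoms of $S_i$. The paper merely phrases this last step as a contradiction over the mass-point sums ($\sum_n a_n i(x_n;p_{X_0})<I(p_{X_0})$ would make the weak derivative strictly negative), whereas you argue directly via $p_{X_i}$-almost-everywhere equality; the two are interchangeable.
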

\begin{proof}
Let $p_{X_0}(x)$ and $p_{X_1}(x)$ be two optimal input \ac{p.m.}'s (whose existence is guaranteed by Proposition \ref{prop1}), both with a finitely discrete support. Then also $(1-\theta)p_{X_1}(x)+\theta p_{X_0}(x)$ is capacity achieving, since the mutual information functional is concave (see Theorem \ref{ConFunc} in Appendix \ref{sec:Theorems}). This fact yields the weak derivative $I'_{p_{X_0}}(p_{X_1})$ to be null. Recall the probability mass points in $S_0$ and $S_1$ $x_m$ and $x_n$, and the correspondent probability $b_m$ and $a_n$, respectively. In addition suppose that the condition enunciated in Proposition 2 is not verified, i.e., $i(x;p_{X_0})<I(p_{X_0})$ for at least one of the $x_n \in S_1$, where the order relation is imposed by Corollary \ref{cor}. The cited weak derivative expression becomes
\begin{align}
\int\limits_{\mathbb{S}} i(x;p_{X_0})\left[p_{X_1}(x)-p_{X_0}(x)\right] \ud x &= \sum\limits_{n} a_n i(x_n;p_{X_0}) - \sum\limits_{m} b_m i(x_m;p_{X_0}) \nonumber\\
&< I(p_{X_0})\sum\limits_{n} a_n - I(p_{X_0})\sum\limits_{m} b_m = 0. \nonumber
\end{align}
A contradiction has arisen since $I'_{p_{X_0}}(p_{X_1})=0$ and $I'_{p_{X_0}}(p_{X_1})<0$, which completes the proof.
\end{proof}
This Proposition \ref{prop2} does not provide uniqueness of the capacity achieving input \ac{p.m.}, nevertheless it tightens the conditions for an input \ac{p.m.} to be optimal. Future attempts will be made aiming to prove uniqueness.

\section{Examples}\label{sec:Examples}
This section is divided in two subsections. The first one proves that any peak power constrained channel with additive noise satisfies condition \ref{cond4}, stated in Section~\ref{sec:Channel} and, therefore, it belongs to the general class of channels treated in this paper upon fulfilling also conditions \ref{cond1}, \ref{cond2}, and \ref{cond3}.\footnote{The fulfilment of conditions 1-3 must be checked case by case, but it is expected to be a simple verification.} The second one proves that the Rayleigh fading channel undergoes all the conditions in Section~\ref{sec:Channel}. With respect to the theory proposed in \cite{Tch:04}, we underline that the conditions in Section~\ref{sec:Channel} are less stringent, so a wider set of additive channels is characterized. 
\subsection{Additive Channels}\label{AddChannels}
Consider an additive channel model $Y=X+N$, where $N$ is the noise \ac{RV}. The marginal information density can be rewritten as
\begin{align}
i(x;p_X)&=\int_{\mathbb{R}}p_N (y-x)\log p_{N} (y-x)\ud y - \int_{\mathbb{R}}p_{N}(y-x)\log p_{Y}(y;p_X)\ud y \nonumber\\
&= k - \int_{\mathbb{R}}p_{N}(y-x)\log p_{Y}(y;p_X)\ud y \nonumber
\end{align}
where $k$ is constant as it can be easily shown with an ordinary variable substitution. The second term is in the form of convolution and admits \ac{FT} since $p_N(\cdot)$ is integrable on $\mathbb{R}$ and $\log p_Y(y;p_X)=u(y)$ is locally integrable hence transformable at least in the sense of distributions. Now assume the marginal information density is equal to a constant $c_1$: Its \ac{FT} would then be
\begin{align}
\Psi_N(2\pi f) U(f)=c_1\delta (f) \nonumber
\end{align}
where $\Psi_N(\cdot)$ denotes the characteristic function of the \ac{RV} $N$, defined as
\begin{align}
\Psi_N(f)=\mathbb{E}[\exp \{jxf\}]=\int_{\mathbb{R}} p_N(x)\exp \{jxf\} \ud x \nonumber.
\end{align}
The only case for this to hold is $u(y)$ being a constant itself: This is however contradictory since $u(y)=c_2$ implies $p_Y(y;p_X)=2^{c_2}$, which is clearly an absurd, and hence condition \ref{opt3} stands.
\subsection{Rayleigh Fading Channel}\label{RayChannel}
Consider the Rayleigh fading channel conditional output \ac{p.m.}, as defined in \cite{FayTroSha:01},
\begin{align}
p_{Y|X}(y|x)&=\frac{1}{1+x^2} \exp \left\{-\frac{y}{1+x^2}\right\}\nonumber \\
&=s\exp \{-ys\} \nonumber
\end{align}
and assume the channel input $X$ is subject to a peak power constraint $A$ as defined in Section~\ref{sec:Channel}. Since this conditional \ac{p.m.} derives from normalizations of the original input and output modules, $U$ and $V$ in \cite{FayTroSha:01}, this is a real scalar memoryless channel whose output takes values in $[0,+\infty)$. \newline
We now assess that the four conditions stated in Section~\ref{sec:Channel} are fulfilled.
\begin{enumerate}[\indent(a)]
\item
It is immediate to verify that condition \ref{cond1} holds over the set $\mathbb{A}=\mathbb{C} \smallsetminus \{-j,j\}$. 
\item
Concerning condition \ref{cond2}, let us define
\begin{align}
Q(y)=\begin{cases}
1, & \quad 0 \le y \le c(A^2+1) \\
\frac{1}{y^{1+\gamma}}, &\quad y>c(A^2+1) \nonumber
\end{cases}
\end{align}
where parameter $\gamma$ fulfils $\gamma <1$ and $c$ is a constant such that $c>2$ (the details are provided in Appendix~\ref{sec:Qandq}). Moreover, let us define
\begin{align}
q(y)=\begin{cases}
\frac{1}{1+A^2}\exp \left\{-\frac{y}{1+A^2}\right\}, & \quad 0 \le y \le \frac{(1+A^2)\ln(1+A^2)}{A^2} \\
\exp \{-y\}, & \quad y>\frac{(1+A^2)\ln(1+A^2)}{A^2} \nonumber
\end{cases}
\end{align}
where $y_2=\frac{(1+A^2)\ln(1+A^2)}{A^2}$ is the solution of $\frac{1}{1+A^2}\exp \left\{-\frac{y}{1+A^2}\right\} = \exp \{-y\}$. The two functions $q(\cdot)$ and $Q(\cdot)$ satisfy inequality \eqref{qQ}, as rigorously proven in Appendix~\ref{sec:Qandq}. Furthermore, both of them are nonnegative,  superiorly bounded, and integrable over the output domain $[0,+\infty)$. Besides $Q(y)\log q(y)$ is integrable over $[0,+\infty)$, which may be shown by analysing integrability over the tail.\footnote{$Q(y)\log q(y)$ is locally integrable since it is continuous.} We have
\begin{align}
\int_{y_3}^{+\infty} Q(y)\log q(y)\ud y &= \int_{y_3}^{+\infty} \frac{1}{y^{1+\gamma}} \exp \{-y\} \ud y \nonumber\\
&= \left[ -\frac{y^{-\gamma}}{\gamma} \exp \{-y\} \right]_{y_3}^{+\infty} - \int_{y_3}^{+\infty} \frac{y^{-\gamma}}{\gamma} \exp \{-y\} \ud y \nonumber
\end{align} 
which is finite. The considered $y_3$ is sufficiently large to guarantee that the expressions employed for $Q(y)$ and $q(y)$ are the proper ones. 
\item
We now consider condition \ref{cond3}. The integral 
\begin{align}
\int_{0}^{+\infty} p_{Y|X}(y|w) \log p_Y(y;p_X) \ud y \nonumber
\end{align} 
is uniformly convergent on $\mathbb{D}=\left\{w : \Re \{\frac{1}{1+w^2}\} \geq a_1, |\frac{1}{1+w^2}| \le a_2 \right\}$, with strictly positive $a_1$ and $a_2$, and with $a_1$ ensuring that $\mathbb{S} \subset \mathbb{D}$. Uniform convergence holds since, for each $w \in \mathbb{D}$, given $\epsilon$, there exist $B_0<B_1<B_2$ such that
\begin{align}
\big| \int_{B_1}^{B_2} p_{Y|X}(y|w) \log p_Y(y;p_X) \ud y \big| & \le \int_{B_1}^{B_2} \left| p_{Y|X}(y|w) \log p_Y(y;p_X) \right| \ud y \nonumber\\
& = \int_{B_1}^{B_2} \left| \frac{1}{1+w^2} \right| \left| \exp \left\{ -\frac{y}{1+w^2} \right\} \log p_Y(y;p_X) \right| \ud y \nonumber \\
& = \int_{B_1}^{B_2} \left| \frac{1}{1+w^2} \right| \left| \exp \left\{ -y\Re\left\{\frac{1}{1+w^2}\right\} \right\} \log p_Y(y;p_X) \right| \ud y \nonumber \\
& \le -\int_{B_1}^{B_2} \frac{1}{y^{3}} \log q(y) \ud y <\epsilon \nonumber
\end{align}
as $\left| \frac{1}{1+w^2} \exp \left\{ -y\Re\left\{\frac{1}{1+w^2}\right\} \right\} \right|$ is minor in a definitive manner in $y$ than $1/y^3$ regardless of $w \in \mathbb{D}$.\footnote{This is guaranteed by the existence of a maximum for $|\frac{1}{1+w^2}|$ and a non zero minimum for $\Re\left\{\frac{1}{1+w^2}\right\}$ on $\mathbb{D}$.} To prove the result it is also necessary to employ \eqref{f_g} in Appendix~\ref{sec:PrelSteps} and to choose $B_0$ in such a way that $B_0>y_2$ and $\frac{1}{B_0}<\epsilon$. The choice for $\mathbb{D}$ is dictated by the necessity to guarantee the existence of a uniform upper bound for $| p_{Y|X}(y|w) \log p_Y(y;p_X) |$. Analogously, also 
\begin{align}
\int_{0}^{+\infty} p_{Y|X}(y|w) \log p_{Y|X}(y|w) \ud y \nonumber
\end{align} 
is uniformly convergent on $\mathbb{D}$. In fact, for each $w \in \mathbb{D}$, given $\epsilon$, there exist $B_0<B_1<B_2$ such that
\begin{align}
\big| \int_{B_1}^{B_2} p_{Y|X}(y|w) & \log p_{Y|X}(y|w) \ud y \big| \le \int_{B_1}^{B_2} \left| p_{Y|X}(y|w) \log p_{Y|X}(y|w) \right| \ud y \nonumber\\
& \le \int_{B_1}^{B_2} \left| \frac{1}{1+w^2} \right| \left| \exp \{ -\frac{y}{1+w^2} \} \log \frac{1}{1+w^2} \right| \ud y \nonumber \\
& \phantom{=} + \int_{B_1}^{B_2} \left| \frac{1}{1+w^2} \right| \left| \exp \left\{ -\frac{y}{1+w^2} \right\} \log \left( \exp \left\{ -\frac{y}{1+w^2} \right\} \right) \right| \ud y \nonumber \\
& \le \int_{B_1}^{B_2} \frac{1}{y^{2}} \ud y <\epsilon \nonumber
\end{align}
where again $B_0$ is chosen to ensure $1/B_0<\epsilon$.
\item
We finally have to address condition \ref{cond4}. Consider
\begin{align}\label{Ray}
\int_{0}^{+\infty} p_{Y|X}&(y|x) \log p_{Y|X}(y|x) \ud y - \int_{0}^{+\infty} p_{Y|X}(y|x) \log p_Y(y;p_X) \ud y \nonumber \\
&= \int_{0}^{+\infty} s\exp\{-ys\} \log \left( s\exp\{-ys\} \right) \ud y - \int_{0}^{+\infty} s\exp\{-ys\} \log p_Y(y;p_X) \ud y \nonumber\\
&= \log s - \frac{1}{\ln 2} - \int_{0}^{+\infty} s\exp\{-ys\} \log p_Y(y;p_X) \ud y.
\end{align}
The third term dependence\footnote{Dependence on variable $s$ is the same independently of the $s$ considered: It is thus possible to consider values for $s$ even outside the region dictated by the particular channel capacity problem we are considering.} on $s$ cannot be logarithmic since 
\begin{align}
\lim_{s \to +\infty} - \int_{0}^{+\infty} s\exp\{-ys\} \log p_Y(y;p_X) \ud y =0 \nonumber
\end{align}  
where exchange between integral and limit is licit since when $s \to +\infty$ it can be supposed greater than $1$, this ensuring the existence of an integrable upper bound of $|s\exp\{-ys\}\log p_Y(y;p_X)|$, much as previously done for integrability of $Q(y)\log q(y)$. Hence the difference between the first and third term of \eqref{Ray} cannot be constant on $\mathbb{S}$, this proving condition \ref{opt3} to hold.
\end{enumerate}

\section{Conclusion}\label{sec:conclusions}
This paper has proposed general conditions on the conditional output \ac{p.m.} under which real scalar channel models, with input peak power constraints, show to have capacity achieving \ac{p.m.}'s which are finitely discrete. These conditions represent a step towards a full understanding of the basic channel characteristics that determine the capacity achieving input \ac{p.m.} to be finitely discrete under peak power constraints. The here presented theory of peak power limited channels unifies under a same framework several channel models that were previously investigated using separated approaches, as shown by the provided examples.

Particular attention will be paid in the future to whether all of the supposed conditions are strictly necessary. Our feeling is that some of those conditions are not negotiable while other ones may not be as fundamental as they can appear to be. 

As last but not least consideration, we have matured the deep belief that only \emph{real} scalar peak power limited channels can have a finitely discrete capacity achieving input probability measure.

\appendices

%

\section{Boundedness and Continuity of the Marginal Information Density} \label{sec:PrelSteps}
The existence and boundedness of the upper and lower bounds on $p_{Y|X}(y|x)$, postulated in Section~\ref{sec:Channel} is sufficient to prove the existence and boundedness of $p_Y(y;p_X)$. In fact, we can write
\begin{align}
& q(y)= \int\limits_{\mathbb{S}} q(y) p_X(x) \ud x \le \int\limits_{\mathbb{S}} p_{Y|X}(y|x) p_X(x) \ud x 
\le \int\limits_{\mathbb{S}} Q(y) p_X(x) \ud x = Q(y) \nonumber
\end{align}
that is
\begin{align} \label{q_Q}
0 \le q(y) \le p_Y(y;p_X) \le Q(y) \le K, \;\;\; \forall y \in \mathbb{R} \; \mathrm{and} \; \forall \, p_X(x) \in \mathscr{P}.
\end{align}
An equally useful inequality, immediately descending from the previous one, is the following:
\begin{align} \label{log}
-\log{Q(y)} \le & -\log{p_Y(y;p_X)} \le -\log{q(y)}, \;\;\; 
\forall y \in \mathbb{R} \; \mathrm{and} \; \forall \, p_X(x) \in \mathscr{P}.
\end{align}
Moreover, consider the pair of functions $f(y)$ and $g(y)$, respectively nonnegative and positive, such that $g(y) \le K < +\infty$. The next inequality holds:
\begin{align} \label{f_g}
\left| f(y) \log g(y) \right| &\le -f(y) \log \frac{g(y)}{K} + f(y)|\log{K}|  \nonumber\\
& \le -f(y)\log g(y) + 2 f(y)|\log K|.
\end{align}
Besides
\begin{align}
G(y)=-Q(y)\log q(y) + 2 Q(y)|\log K| \nonumber
\end{align}
is integrable on $\mathbb{R}$. Proof for this is an immediate consequence of the conditions in Sec. \ref{sec:Channel}.

We now show that $h(x;p_X)$ and $i(x;p_X)$ are bounded $\forall x \in \mathbb{S}$ and $\forall \; p_X(x) \in \mathscr{P}$. In fact we have
\begin{align}
|h(x;p_X)| & =\left| \int\limits_{\mathbb{R}} p_{Y|X}(y|x) \log p_Y(y;p_X) \ud y \right| \le 
\int\limits_{\mathbb{R}} \left| p_{Y|X}(y|x) \log p_Y(y;p_X) \right| \ud y \nonumber \\
& \le \int\limits_{\mathbb{R}} p_{Y|X}(y|x) \big[-\log p_Y(y;p_X) + 2 |\log{K}|\big] \ud y 
\le \int\limits_{\mathbb{R}} Q(y) [-\log q(y) + 2 |\log{K}|] \ud y \nonumber\\
& = \int\limits_{\mathbb{R}} G(y) \ud y < +\infty \nonumber
\end{align}
having used \eqref{qQ}, \eqref{q_Q}, \eqref{log} and \eqref{f_g}. Moreover, we have
\begin{align}
|d(x)| & =\left| - \int\limits_{\mathbb{R}} p_{Y|X}(y|x) \log p_{Y|X}(y|x) \ud y \right| 
\le \int\limits_{\mathbb{R}} \left| p_{Y|X}(y|x) \log p_{Y|X}(y|x) \right| \ud y \nonumber \\
& \le \int\limits_{\mathbb{R}} p_{Y|X}(y|x) \left[ -\log p_{Y|X}(y|x) + 2 |\log{K}| \right] \ud y 
\le \int\limits_{\mathbb{R}} Q(y) \left[ -\log q(y) + 2 |\log{K}|\right] \ud y \nonumber\\
& = \int\limits_{\mathbb{R}} G(y) \ud y < +\infty \nonumber
\end{align}
where we again exploited \eqref{qQ}, \eqref{q_Q}, \eqref{log} and \eqref{f_g}. We may then conclude that $i(x;p_X)=h(x;p_X)-d(x)$ is bounded, as it is the difference between two quantities fulfilling the same finite boundedness property.

Continuity of $i(x;p_X)$ can be demonstrated in an almost identical way since, $\forall x \in \mathbb{S}$, it is possible to exchange the continuity limit with the integral in the definition of $i(\cdot)$, this being guaranteed by integrability of $G(y)$, and continuity of the integrand functions being an immediate evidence.

\section{Proof of Equation \eqref{Derivative}} \label{sec:Der}
The weak derivative can be developed as
\begin{align}
\phantom{=\;} & \lim_{\theta \to 0}  \frac{ I((1-\theta)F_1 + \theta F_2) - I(F_1) } {\theta} 
\nonumber \\
&= \lim_{\theta \to 0} \! \left\{ \frac{1}{\theta} \int_{\mathbb{R}} \int_{\mathbb{S}} \! p_{Y|X}(y|x) \! \log \! \frac {p_{Y|X}(y|x)} {\!p(y;(1\!-\!\theta)F_1 \!\!+\! \theta F_2)} \right. 
[(1\!-\!\theta)\ud F_1\!(x)\!\! + \!\! \theta \ud F_2\!(x)] \ud y \nonumber \\
& \phantom{=\;} - \; \left. \frac{1}{\theta} \int_{\mathbb{R}} \int_{\mathbb{S}} p_{Y|X}(y|x) \log \frac {p_{Y|X}(y|x)} {p_Y(y;F_1)} \ud F_1(x) \ud y \right\} \nonumber \\
& = \lim_{\theta \to 0} \left\{ \frac {1} {\theta} \int_{\mathbb{R}} \int_{\mathbb{S}} p_{Y|X}(y|x) \left[ 
- \log p_Y(y;F_\theta) 
+ \log p_Y(y;F_1) \right] \ud F_1(x) \ud y \right. \nonumber \\
& \phantom{=\;} + 
\int_{\mathbb{R}} \int_{\mathbb{S}} p_{Y|X}(y|x) \log \frac {p_{Y|X}(y|x)} { (1-\theta)p_Y(y;F_1) + \theta p_Y(y;F_2) } 
[\ud F_2(x)- \ud F_1(x)] \ud y \Big\} \nonumber \\ 
& \overset{\mathrm{(a)}}{=} \lim_{\theta \to 0} \left\{ \frac {1} {\theta} \int_{\mathbb{R}} \int_{\mathbb{S}} p_{Y|X}(y|x) \left[ \log p_Y(y;F_1) \right. 
- \log \left( (1-\theta)p_Y(y;F_1) + \theta p_Y(y;F_2) \right) \right] \ud F_1(x) \ud y \Big\} \nonumber \\
& \phantom{=\;}+ \int_{\mathbb{S}} i(x;F_1) \ud F_2(x) - I(F_1) \nonumber \\
& \overset{\mathrm{(b)}}{=} \lim_{\theta \to 0} \left\{ \frac {1} {\theta} \int_{\mathbb{R}} \int_{\mathbb{S}} p_{Y|X}(y|x) \Big[ \log p_Y(y;F_1) - \log p_Y(y;F_1) \right. \nonumber \\
& \phantom{=\;}- \left. \frac{1}{\ln 2} \theta \frac{- p_Y(y;F_1) + p_Y(y;F_2)}{p_Y(y;F_1)} \Big] \ud F_1(x) \ud y \right\} 
+ \int_{\mathbb{S}} i(x;F_1) \ud F_2(x) - I(F_1) \nonumber \\
& = \frac{1}{\ln 2} \int_{\mathbb{R}} \int_{\mathbb{S}} p_{Y|X}(y|x) \left( 1 - \frac{p_Y(y;F_2)}{p_Y(y;F_1)} \right) \ud F_1(x) \ud y + \int_{\mathbb{S}} i(x;F_1) \ud F_2(x) - I(F_1) \nonumber\\
& = \int_{\mathbb{S}} i(x;F_1) \ud F_2(x) - I(F_1) 
+ \frac{1}{\ln 2} \int_{\mathbb{R}} p_Y(y;F_1) \left( 1 - \frac{p_Y(y;F_2)}{p_Y(y;F_1)} \right) \ud y  \nonumber \\
& = \int_{\mathbb{S}} i(x;F_1) \ud F_2(x) - I(F_1) \nonumber
\end{align}
where the exchange between limit and integral in $\mathrm{(a)}$ follows from the Lebesgue Dominated Convergence Theorem. In fact, $\forall \theta \in [0,1]$ 
\begin{align}
f_\theta(y,x)p_{X_2}(x) &= p_{Y|X}(y|x)p_{X_2}(x) \log\frac{p_{Y|X}(y|x)}{p_Y(y;F_\theta)} \nonumber \\
&\le p_{X_2}(x)\left(\left|p_{Y|X}(y|x)\log{p_{Y|X}(y|x)}\right| + \left|p_{Y|X}(y|x)\log{p_{Y}(y;F_\theta)}\right|\right) \nonumber
\end{align}
which is integrable on $\mathbb{R}\times \mathbb{S}$,\footnote{Integration on $\mathbb{R}$ produces $p_{X_2}(x)i(x;F_\theta)$ that is integrable on $\mathbb{S}$ due to the boundedness of $i(\cdot)$.} by integrability of $G(y)$, and then also on $\mathbb{S}\times \mathbb{R}$ via Tonelli and Fubini Theorems and due to the fact that $f_\theta(y,x)$ converges, for $\theta \to 0$, to $f(y,x)=p_{Y|X}(y|x)\log\frac{p_{Y|X}(y|x)}{p_Y(y;F_1)}$.
Moreover, $\mathrm{(b)}$ follows from the first order McLaurin Series
\begin{align}
\log(a(1-x)+bx)\! = \!\log(a) \!+ \!\frac{x(-a+b)}{a} \frac{1}{\ln 2} +o(x). \nonumber
\end{align}

\section{An upper and lower bound for the Rayleigh fading conditional output \ac{p.m.}} \label{sec:Qandq}
In this appendix we rigorously prove inequality \eqref{qQ} to be satisfied in case the considered conditional output \ac{p.m.} and correspondent $Q(y)$ and $q(y)$ are the ones introduced in Section~\ref{RayChannel}. Concerning the upper bound, we have to show that there exist a parameter $\gamma$ such that
\begin{align}\label{upper}
\frac{1}{y^{1+\gamma}}>\frac{1}{1+x^2}\exp\left\{-\frac{y}{1+x^2}\right\}, \quad \forall x: 0\le |x|\le A 
\end{align}
is valid for $y>c(A^2+1)$, where $c>2$. The considered inequality can be reformulated as follows
\begin{align}
\frac{y}{(1+x^2)(1+\gamma)}+\frac{\ln(1+x^2)}{1+\gamma}>\ln y. \nonumber
\end{align}
To guarantee the inequality to be fulfilled even in the worst case, the left hand side ($x$ is confined in it) can be studied, for each fixed $y$, to find out that
$\sqrt{y-1}$ is its minimum in $x$, provided $y\geq 1$. Moreover, if $y\geq A^2+1$ the minimum becomes $x=A$, since $x$ is bounded and $\sqrt{y-1}$ is unreachable in this case. The minimum expression for $y\geq A^2+1$ is
\begin{align}
\frac{y}{(1+A^2)(1+\gamma)}+\frac{\ln(1+A^2)}{1+\gamma} \nonumber
\end{align}
which has constant derivative in $y$ equalling the derivative of $\ln y$ at $y=(1+A^2)(1+\gamma)$. Now, consider $y_1=c(1+A^2)$: If constants $c$ and $\gamma$ are chosen such that $c>2$ and $\gamma <1$, then $y_1>(1+A^2)(1+\gamma)$ and
\begin{align}
\left.\frac{\ud}{\ud y} \left[\frac{y}{(1+A^2)(1+\gamma)}+\frac{\ln(1+A^2)}{1+\gamma}\right]\right|_{y_1} > \left.\frac{\ud}{\ud y} \ln y \right|_{y_1}. \nonumber
\end{align}
This ensures the derivative of $\frac{y}{(1+A^2)(1+\gamma)}+\frac{\ln(1+A^2)}{1+\gamma}$ to be greater than the one of $\ln y$, which is decreasing, for $y>y_1$. If, finally, it is possible to derive a condition on $\gamma$ to provide that
\begin{align}\label{minln}
\left.\left(\frac{y}{(1+A^2)(1+\gamma)}+\frac{\ln(1+A^2)}{1+\gamma}\right)\right|_{y_1}> \left.\ln y \right|_{y_1}
\end{align}
the original assertion \eqref{upper} would be satisfied. This is indeed possible since \eqref{minln} becomes
\begin{align}
\frac{c}{1+\gamma}+\frac{\ln(1+A^2)}{1+\gamma}>\ln(1+A^2)+\ln c \nonumber
\end{align}
which is satisfied for $\gamma<\frac{c-\ln c}{\ln\left[c(1+A^2)\right]}$. Any choice of $\gamma$ such that 
\begin{align}
\gamma<\min\left\{1,\frac{c-\ln c}{\ln\left[c(1+A^2)\right]}\right\} \nonumber
\end{align}
would fulfil the scope. Consequently the definition
\begin{align}
Q(y)=\begin{cases}
1, & \quad 0 \le y \le c(A^2+1) \\
\frac{1}{y^{1+\gamma}}, &\quad y>c(A^2+1) \nonumber
\end{cases}
\end{align}
is well posed since it guarantees the right hand side of inequality \eqref{qQ} to be respected.

Concerning the lower bound $q(y)$, we have to prove that it coincides with the output \ac{p.m.} conditioned by the maximum input up to $y_2=\frac{(1+A^2)\ln(1+A^2)}{A^2}$ and that it coincides with the output \ac{p.m.} conditioned by the minimum input after that same $y_2$. To do that, consider the intersection between $\frac{1}{1+x^2}\exp\left\{-\frac{y}{1+x^2}\right\}$ and $\exp\{-y\}$ which is given by
\begin{align}
y(x)=\frac{(1+x^2)\ln(1+x^2)}{x^2}. \nonumber
\end{align}
This intersection is non decreasing in $x$ for $0 \le x \le A$ (only positive values are admissible for $x$, deriving from normalization in \cite{FayTroSha:01}) since
\begin{align}
\frac{\ud y(x)}{\ud x} = \frac{2}{x^3}\left[ x^2-\ln(1+x^2) \right]\geq 0 \nonumber
\end{align}
this meaning that it is maximum for $x=A$. This duly proves that the output \ac{p.m.} conditioned by the maximum input lies under all the other conditional output \ac{p.m.} up to its intersection with $\exp\{-y\}$ in $y_2=\frac{(1+A^2)\ln(1+A^2)}{A^2}$, while afterwards the same role is taken by $\exp\{-y\}$. This finally proves that
\begin{align}
q(y)=\begin{cases}
\frac{1}{1+A^2}\exp \left\{-\frac{y}{1+A^2}\right\}, & \quad 0 \le y \le \frac{(1+A^2)\ln(1+A^2)}{A^2} \\
\exp \{-y\}, & \quad y>\frac{(1+A^2)\ln(1+A^2)}{A^2} \nonumber
\end{cases}
\end{align}
is also well posed, fulfilling the left hand side of \eqref{qQ}.

\section{Useful Theorems} \label{sec:Theorems}
This appendix provides a collection of theorem statements (along with the appropriate references) that are used throughout this paper. 

\medskip
\begin{theorem}[Helly's Weak Compactness Theorem \cite{Loe:B77}]
Every sequence of \ac{c.d.f.}'s is weakly compact.\footnote{Recall that a set is said to be \emph{compact}, in the sense of a type of convergence, if every infinite sequence in the set contains a subsequence which is convergent in that same sense \cite{Loe:B77}.}
\end{theorem}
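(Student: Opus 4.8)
The statement is the Helly selection theorem: every sequence $\{F_n\}_{n\in\mathbb N}$ of \ac{c.d.f.}'s admits a subsequence that converges weakly — i.e.\ pointwise at every continuity point of the limit — to some nondecreasing, right-continuous, $[0,1]$-valued function $F$. When the $F_n$ are all concentrated on the bounded set $\mathbb S\subseteq[-A,A]$, as is the case for the members of $\mathscr F$, no probability mass can escape to $\pm\infty$, so the weak limit is in fact a genuine \ac{c.d.f.} lying in $\mathscr F$; this proper-limit feature, together with the equivalence between Lévy-metric, complete, and weak convergence on a bounded interval, is what is actually invoked when compactness of $\mathscr F$ is used in the proof of Proposition~\ref{prop1}.

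The plan is the classical Cantor diagonalisation followed by a continuity-point passage. First I would fix a countable dense set $\{r_j\}_{j\in\mathbb N}\subset\mathbb R$ (e.g.\ the rationals). Since $\{F_n(r_1)\}_n\subset[0,1]$, Bolzano--Weierstrass yields a subsequence $\{F^{(1)}_n\}$ along which $F^{(1)}_n(r_1)$ converges; inductively, from $\{F^{(k-1)}_n\}$ extract $\{F^{(k)}_n\}$ along which $F^{(k)}_n(r_k)$ converges. The diagonal subsequence $G_k:=F^{(k)}_k$ then has $\gamma(r_j):=\lim_k G_k(r_j)$ existing for every $j$, and monotonicity of each $F_n$ passes to the limit, making $\gamma$ nondecreasing on $\{r_j\}$.

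Next I would define $F(x):=\inf\{\gamma(r_j):r_j>x\}$ for all $x\in\mathbb R$; it is immediate that $F$ is nondecreasing, right-continuous and $[0,1]$-valued. It remains to prove $G_k(x)\to F(x)$ at every continuity point $x$ of $F$. For rationals $p<x<q$ the monotone sandwich $G_k(p)\le G_k(x)\le G_k(q)$ passes to the limit as $\gamma(p)\le\liminf_k G_k(x)\le\limsup_k G_k(x)\le\gamma(q)$; letting $p\uparrow x$ and $q\downarrow x$ through the rationals and using monotonicity and density to identify $\sup_{p<x,\,p\in\mathbb Q}\gamma(p)=F(x^-)$ and $\inf_{q>x,\,q\in\mathbb Q}\gamma(q)=F(x)$, one gets $F(x^-)\le\liminf_k G_k(x)\le\limsup_k G_k(x)\le F(x)$, which collapses to $\lim_k G_k(x)=F(x)$ exactly where $F$ is continuous. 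That is precisely weak convergence of $G_k$ to $F$.

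The only step needing genuine care is the last one: keeping straight the distinction between $\gamma$ (defined on the countable set only) and its monotone right-continuous extension $F$, and verifying the two identities $\sup_{p<x,\,p\in\mathbb Q}\gamma(p)=F(x^-)$ and $\inf_{q>x,\,q\in\mathbb Q}\gamma(q)=F(x)$, both routine consequences of monotonicity and density but essential for the sandwich to close. The diagonal-extraction part is entirely mechanical, and in the bounded-support setting relevant to this paper the tightness that upgrades the weak limit to a bona fide \ac{c.d.f.} in $\mathscr F$ comes for free.
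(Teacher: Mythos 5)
Your proof is correct and is precisely the classical Helly selection argument—Cantor diagonalization over a countable dense set, the right-continuous envelope $F(x)=\inf\{\gamma(r_j):r_j>x\}$, and the monotone sandwich at continuity points—which is the proof given in the cited reference; the paper itself imports this theorem without proof in Appendix~\ref{sec:Theorems}. Your closing observation that the bounded support $\mathbb{S}\subseteq[-A,A]$ rules out escape of mass, so that the weak limit is a proper \ac{c.d.f.} and weak, complete, and L\`evy convergence coincide on $\mathscr{F}$, correctly pinpoints the tightness step actually needed when the theorem is invoked for compactness in Proposition~\ref{prop1}.
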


\medskip
\begin{theorem}[Helly-Bray Theorem \cite{Loe:B77}]
If $g$ is continuous and bounded on $\mathbb{R}^n$, then $F_n \xrightarrow[n]{c} F$ up to additive constants implies $\int g \, \ud F_n \to \int g \, \ud F$.
\end{theorem}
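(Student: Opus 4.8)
The plan is to prove the Helly--Bray Theorem by the classical truncation-plus-step-function argument: first confine the integration to a large box carrying almost all of the limiting mass, then approximate the bounded continuous integrand $g$ uniformly on that box by a step function whose breakpoints sit at continuity points of $F$, and finally exploit the pointwise convergence $F_n\to F$ at continuity points to pass to the limit on the step function. Since $\int g\,\ud F_n$ depends only on the Stieltjes measure $\ud F_n$ and not on the additive normalization of $F_n$, the phrase ``up to additive constants'' is immaterial for the conclusion, so I would fix representatives once and for all and argue directly with the associated measures $\ud F_n$ and $\ud F$.

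Fix $\epsilon>0$ and set $M=\sup_{\mathbb{R}^n}|g|<+\infty$. First I would establish tightness: complete convergence supplies both the continuity-point convergence $F_n(x)\to F(x)$ and the convergence of total masses $\int\ud F_n \to \int\ud F$. From these I would select a box $B=[a_1,b_1]\times\cdots\times[a_n,b_n]$, with every face placed on a coordinate hyperplane carrying no $\ud F$-mass, such that the mass of $\ud F$ outside $B$ is below $\epsilon$ and, for all large $n$, the mass of $\ud F_n$ outside $B$ is below $2\epsilon$. This is precisely the step that uses the total-mass part of complete convergence, since it prevents mass from escaping to infinity. On $B$, uniform continuity of $g$ lets me choose a finite axis-parallel grid, again with all dividing hyperplanes carrying no $\ud F$-mass, fine enough that $g$ oscillates by less than $\epsilon$ on each cell; I then define $g_\epsilon$ to equal the value of $g$ at a fixed corner of each cell, so that $|g-g_\epsilon|<\epsilon$ on $B$.

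The passage to the limit is then routine. Each grid cell has corners that are continuity points of $F$, so its $\ud F_n$-mass, expressed through the corner values of $F_n$ by inclusion--exclusion, converges to its $\ud F$-mass; hence $\int_B g_\epsilon\,\ud F_n \to \int_B g_\epsilon\,\ud F$, being a finite sum of such terms. Splitting $\int g\,\ud F_n-\int g\,\ud F$ by the triangle inequality into (i) $\int_B (g-g_\epsilon)\,\ud F_n$, bounded by $\epsilon\int\ud F_n$, (ii) the step-function difference $\int_B g_\epsilon\,\ud F_n-\int_B g_\epsilon\,\ud F$, which tends to $0$, (iii) $\int_B (g_\epsilon-g)\,\ud F$, bounded by $\epsilon\int\ud F$, and (iv) the two tail contributions outside $B$, each bounded by $M$ times the respective tail mass, yields $\limsup_n|\int g\,\ud F_n-\int g\,\ud F|\le C\epsilon$ with $C$ depending only on $M$ and $\sup_n\int\ud F_n<+\infty$. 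Letting $\epsilon\downarrow 0$ finishes the argument.

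The main obstacle is the tightness/tail-control step: for merely bounded (not compactly supported) $g$, convergence at continuity points alone is insufficient, and the conclusion would fail if mass leaked to infinity; ruling this out is exactly where the total-mass hypothesis of complete convergence is indispensable. A secondary technicality is the grid selection, which requires all dividing hyperplanes to carry no $\ud F$-mass; this is possible because in each coordinate only countably many axis-parallel hyperplanes can have positive $\ud F$-mass, leaving a dense set of admissible positions. I note finally that in the use made of this theorem within the present paper the distribution functions are supported on the compact interval $\mathbb{S}\subseteq[-A,A]$, so the tail step is vacuous and the argument collapses to the one-dimensional step-function approximation with breakpoints at continuity points of $F$.
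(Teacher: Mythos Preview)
Your argument is the classical truncation-and-step-function proof of the Helly--Bray Theorem and is correct as written; the tightness step, the grid selection at continuity hyperplanes, and the final triangle-inequality splitting are all standard and handled properly. Note, however, that the paper does not supply its own proof of this statement: the Helly--Bray Theorem appears in Appendix~\ref{sec:Theorems} only as a quoted result with a citation to Lo\`eve, alongside other auxiliary theorems used without proof. There is therefore no ``paper's proof'' to compare against; your proposal simply fills in what the paper takes as known from the literature. Your closing observation that in the paper's application the input distributions live on the bounded set $\mathbb{S}\subseteq[-A,A]$, so the tail step is vacuous, is apt and matches exactly how the theorem is invoked in the continuity argument of Proposition~\ref{prop1}.
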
 
This theorem is formulated in terms of complete convergence, but complete convergence is equivalent to L\`evy convergence in $\mathscr{F}$.
\begin{theorem}[Differentiation Lemma \cite{Lan:B99}]
Let $I$ be an interval of real numbers, eventually infinite, and $U$ be an open set of complex numbers. Let $f=f(t,z)$ be a continuous function on $I\times U$. Assume 1) for each compact subset $\mathbb{K}$ of $U$ the integral $\int\limits_{I} f(t,z)\ud t$ is uniformly convergent for $z \in \mathbb{K}$, 2) for each $t$, the function $z\mapsto f(t,z)$ is analytic, then the integral function $F(z)=\int\limits_{I} f(t,z) \ud t$ is analytic on $U$.
\end{theorem}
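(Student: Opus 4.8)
The plan is to prove analyticity of $F$ not by differentiating under the integral sign, but by combining Morera's theorem with Cauchy's theorem and an interchange of integration order licensed by the uniform-convergence hypothesis. The skeleton is: first establish that $F$ is continuous on $U$; then show that its integral around every closed triangle contained in $U$ vanishes; Morera's theorem then yields analyticity. The vanishing of the triangle integral is where the two hypotheses are spent --- uniform convergence permits swapping the contour integral with the integral over $I$, while pointwise analyticity of $z\mapsto f(t,z)$ makes the inner contour integral zero by Cauchy--Goursat.

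First I would verify that $F$ is well defined and continuous on $U$. Fix $z_0\in U$ and choose a closed disk $\mathbb{K}\subseteq U$ centred at $z_0$; this is a compact subset, so by hypothesis~1 the integral $\int_I f(t,z)\,\ud t$ converges uniformly for $z\in\mathbb{K}$. Writing $I$ as an increasing union of compact subintervals $I_B=I\cap[-B,B]$, uniform convergence means the truncations $\int_{I_B} f(t,z)\,\ud t$ converge to $F(z)$ uniformly over $\mathbb{K}$ as $B\to\infty$. On each $I_B$ the integrand is continuous on the compact set $I_B\times\mathbb{K}$, hence $z\mapsto\int_{I_B}f(t,z)\,\ud t$ is continuous; a uniform limit of continuous functions is continuous, so $F$ is continuous at $z_0$, and since $z_0$ was arbitrary, on all of $U$.

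Next I would fix an arbitrary closed triangle $T$ with $T\subseteq U$ and evaluate $\oint_{\partial T}F(z)\,\ud z$. Parametrising $\partial T$ as a path of finite length lying in the compact set $T$, I would write
\begin{align}
\oint_{\partial T}F(z)\,\ud z=\oint_{\partial T}\left(\int_I f(t,z)\,\ud t\right)\ud z=\lim_{B\to\infty}\oint_{\partial T}\left(\int_{I_B}f(t,z)\,\ud t\right)\ud z \nonumber
\end{align}
where the limit is pulled inside the finite-length contour integral, which is legitimate because the convergence of $\int_{I_B}f(t,z)\,\ud t$ to $F(z)$ is uniform on the compact set $T$. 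For each finite $B$ the integrand is continuous on $I_B\times\partial T$, so Fubini's theorem for a continuous integrand on a compact product yields
\begin{align}
\oint_{\partial T}\left(\int_{I_B}f(t,z)\,\ud t\right)\ud z=\int_{I_B}\left(\oint_{\partial T}f(t,z)\,\ud z\right)\ud t. \nonumber
\end{align}
By hypothesis~2 the map $z\mapsto f(t,z)$ is analytic on the open set $U\supseteq T$, so Cauchy's theorem gives $\oint_{\partial T}f(t,z)\,\ud z=0$ for every fixed $t$; hence each term in the limit is zero and $\oint_{\partial T}F(z)\,\ud z=0$.

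Finally, since $F$ is continuous on $U$ and its integral around every triangle contained in $U$ vanishes, Morera's theorem yields that $F$ is analytic on $U$, completing the proof. The main obstacle is the interchange in the two displays: for a finite interval it is ordinary Fubini for a continuous integrand on a compact product, but when $I$ is infinite one cannot exchange the improper integral with the contour integral directly, and it is precisely the uniform-convergence hypothesis that bridges the gap by letting the truncation limit pass through the bounded-length contour integral. Care must also be taken that the uniform bound is only claimed over the compact contour $\partial T$ (equivalently over the compact $T$), which is all that hypothesis~1 supplies, rather than over all of $U$.
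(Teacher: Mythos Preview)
The paper does not supply its own proof of this statement: the Differentiation Lemma is merely quoted in the appendix of auxiliary theorems and attributed to Lang, so there is no in-paper argument to compare against. Your Morera-based proof is correct and is the standard route (and essentially the one Lang gives): continuity of $F$ from a uniform limit of continuous truncations on compacta, Fubini on each compact truncation $I_B\times\partial T$, Cauchy--Goursat to kill the inner contour integral, and passage of the truncation limit through the finite-length contour integral via the uniform-convergence hypothesis on the compact set $T$.
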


\medskip
\begin{theorem}[\cite{Lue:B69}, Proposition 1, Chapter 7.8]\label{ConFunc}
Let f be a concave functional defined on a convex subset C
of a normed space. Let $\mu = \sup_{x\in C} f(x)$. Then 
\begin{enumerate}
\item The subset $\Omega$ of C where $f(x) = \mu$ is convex. 
\item If $x_0$ is a local maximum of $f(\cdot)$, then $f(x_0) = \mu$ and, hence $x_0$ is a global maximum.
\end{enumerate}
\end{theorem}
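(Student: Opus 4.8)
The plan is to prove the two assertions separately, using nothing beyond the definition of concavity and the definition of the supremum $\mu$; both parts are elementary, and the argument parallels the concavity computation already carried out for $I(F)$ in Proposition~\ref{prop1}.

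For the first assertion I would take two maximizers $x_1,x_2 \in \Omega$, so that $f(x_1)=f(x_2)=\mu$, and show that their convex combination $x_\theta=(1-\theta)x_1+\theta x_2$, which lies in $C$ by convexity of $C$, also belongs to $\Omega$. Concavity supplies the lower bound
\begin{align}
f(x_\theta) \ge (1-\theta)f(x_1)+\theta f(x_2) = \mu, \nonumber
\end{align}
while the definition of the supremum supplies $f(x_\theta)\le\mu$. Combining the two inequalities forces $f(x_\theta)=\mu$, hence $x_\theta\in\Omega$ for every $\theta\in[0,1]$, which is exactly the convexity of $\Omega$.

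For the second assertion I would argue by contradiction. Suppose $x_0$ is a local maximum but $f(x_0)<\mu$; then there exists $y\in C$ with $f(y)>f(x_0)$. Forming the segment $x_\theta=(1-\theta)x_0+\theta y\in C$ and invoking concavity,
\begin{align}
f(x_\theta) \ge f(x_0)+\theta\big(f(y)-f(x_0)\big) > f(x_0), \qquad \theta\in(0,1]. \nonumber
\end{align}
Since $\lVert x_\theta-x_0\rVert=\theta\,\lVert y-x_0\rVert\to 0$ as $\theta\to 0$, the points $x_\theta$ enter every norm-neighborhood of $x_0$ while strictly exceeding $f(x_0)$, contradicting local maximality. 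Hence $f(x_0)=\mu$, so $x_0$ is a global maximum.

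The statement is essentially immediate, so there is no genuine obstacle; the only step requiring care is the last one, where the purely algebraic improvement $f(x_\theta)>f(x_0)$ along the segment must be tied to the topological notion of a local maximum. This is handled by observing that the segment shrinks to $x_0$ in the ambient norm, so arbitrarily close competitors with strictly larger value exist unless $x_0$ already attains $\mu$.
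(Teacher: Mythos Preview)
Your proof is correct and is the standard argument for this classical fact. Note, however, that the paper does not actually prove this theorem: it is listed in Appendix~\ref{sec:Theorems} among ``theorem statements (along with the appropriate references)'' and is simply quoted from \cite{Lue:B69} without proof. There is therefore nothing to compare against; your argument is precisely the one found in Luenberger and in any convex-analysis text.
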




\section*{Acknowledgment}

The authors would like to thank Prof. Marco Chiani, Prof. Massimo Cicognani, Andrea Mariani, Simone Moretti and Matteo Mazzotti for useful comments and discussions.

\ifCLASSOPTIONcaptionsoff
  \newpage
\fi



\bibliographystyle{IEEEtran}
\bibliography{IEEEabrv,StringDefinitions,BiblioVZ,BiblioDD,RFID,MyBooks,ITVince,BiblioOFDM}

\begin{thebibliography}{10}
\providecommand{\url}[1]{#1}
\csname url@samestyle\endcsname
\providecommand{\newblock}{\relax}
\providecommand{\bibinfo}[2]{#2}
\providecommand{\BIBentrySTDinterwordspacing}{\spaceskip=0pt\relax}
\providecommand{\BIBentryALTinterwordstretchfactor}{4}
\providecommand{\BIBentryALTinterwordspacing}{\spaceskip=\fontdimen2\font plus
\BIBentryALTinterwordstretchfactor\fontdimen3\font minus
  \fontdimen4\font\relax}
\providecommand{\BIBforeignlanguage}[2]{{%
\expandafter\ifx\csname l@#1\endcsname\relax
\typeout{** WARNING: IEEEtran.bst: No hyphenation pattern has been}%
\typeout{** loaded for the language `#1'. Using the pattern for}%
\typeout{** the default language instead.}%
\else
\language=\csname l@#1\endcsname
\fi
#2}}
\providecommand{\BIBdecl}{\relax}
\BIBdecl

\bibitem{Smi:69}
J.~G. Smith, ``On the information capacity of peak and average power
  constrained {G}aussian channels,'' \emph{Ph.D. dissertation, University of
  California, Berkeley}, 1969.

\bibitem{Smi:71}
------, ``The information capacity of amplitude- and variance-constrained
  scalar {G}aussian channels,'' \emph{Information and Control}, vol.~18, 1971.

\bibitem{Sha:48}
C.~E. Shannon, ``A {M}athematical {T}heory of {C}ommunication,'' \emph{Bell
  System Technical Journal}, Jul./Oct. 1948.

\bibitem{ShaBar:95}
S.~Shamai and I.~Bar-David, ``The {C}apacity of {A}verage and
  {P}eak-{P}ower-{L}imited {Q}uadrature {G}aussian {C}hannels,'' \emph{{IEEE}
  Trans. Inf. Theory}, vol.~41, no.~4, pp. 1060--1071, Jul. 1995.

\bibitem{FayTroSha:01}
I.~Abou-Faycal, M.~Trott, and S.~Shamai, ``The {C}apacity of {D}iscrete-{T}ime
  {M}emoryless {R}ayleigh-{F}ading {C}hannels,'' \emph{{IEEE} Trans. Inf.
  Theory}, vol.~47, no.~4, pp. 1290 --1301, May 2001.

\bibitem{Tch:04}
A.~Tchamkerten, ``On the {D}iscreteness of {C}apacity-{A}chieving
  {D}istributions,'' \emph{{IEEE} Trans. Inf. Theory}, vol.~50, no.~11, pp.
  2773--2778, Nov. 2004.

\bibitem{ChaHraKsc:05}
T.~H. Chan, S.~Hranilovic, and F.~R. Kschischang, ``{C}apacity-{A}chieving
  {P}robability {M}easure for {C}onditionally {G}aussian {C}hannels {W}ith
  {B}ounded {I}nputs,'' \emph{{IEEE} Trans. Inf. Theory}, vol.~51, no.~6, pp.
  2073--2088, Jun. 2005.

\bibitem{FeiMat:07}
A.~Feiten and R.~Mathar, ``Capacity-{A}chieving {D}iscrete {S}ignaling over
  {A}dditive {N}oise {C}hannels,'' in \emph{Proc. IEEE Int. Conf. on Commun.},
  Jun. 2007, pp. 5401--5405.

\bibitem{LeiGeiWit:12}
E.~Leitinger, B.~C. Geiger, and K.~Witrisal, ``{C}apacity and
  {C}apacity-{A}chieving {I}nput {D}istribution of the {E}nergy {D}etector,''
  in \emph{IEEE Int. Conf. on Utra-Wideband}, Sep. 2012, pp. 57--61.

\bibitem{CovTho:B91}
T.~A. Cover and J.~A. Thomas, \emph{Elements of Information Theory},
  1st~ed.\hskip 1em plus 0.5em minus 0.4em\relax New York, NY, 10158: John
  Wiley \& Sons, Inc., 1991.

\bibitem{Lan:B99}
S.~Lang, \emph{Complex Analysis}.\hskip 1em plus 0.5em minus 0.4em\relax New
  York: Springer-Verlag, 1999.

\bibitem{Lue:B69}
D.~Luenberger, \emph{Optimization by Vector Space Methods}.\hskip 1em plus
  0.5em minus 0.4em\relax New York: John Wiley \& Sons, 1969.

\bibitem{Mor:B68}
P.~Moran, \emph{An Introduction to Probability Theory}.\hskip 1em plus 0.5em
  minus 0.4em\relax Oxford: Clarendon Press, 1968.

\bibitem{Loe:B77}
M.~Lo\`{e}ve, \emph{Probability Theory I}, 4th~ed.\hskip 1em plus 0.5em minus
  0.4em\relax New York, NY: Springer-Verlag, 1977.

\end{thebibliography}
\end{document}